\documentclass[a4paper, UKenglish, cleveref, autoref, thm-restate]{lipics-v2021}
\usepackage{graphicx, float, amsmath, amssymb, amsthm, url, hyperref, xurl}
\usepackage[linesnumbered,ruled,noend]{algorithm2e}
\usepackage[T1]{fontenc}
\usepackage[utf8]{inputenc}
\usepackage{tikz}
\usepackage{mathtools}

\nolinenumbers

\DeclareMathOperator{\DAG}{\mathbf{DAG}}
\DeclareMathOperator{\Set}{\mathbf{Set}}
\newcommand\ggf[1]{\mathbf{#1}}
\newcommand\nbv[1]{v(#1)}
\newcommand\nbe[1]{e(#1)}
\newcommand\nbs[1]{s(#1)}

\DeclareMathOperator\aprod{\raisebox{-.1\height}{\begin{tikzpicture}
  \draw[->] (-.4em,0) -- (.4em,0);
  \draw (0,0) circle (.4em);
\end{tikzpicture}}}
\newcommand\Seq[1]{\ensuremath{\textsc{Seq}(#1)}}

\newcommand\A{\mathcal{A}}
\newcommand\B{\mathcal{B}}
\newcommand\C{\mathcal{C}}

\newcommand\G{\mathcal{G}}

\newcommand\PP{\mathbb P}

\title{\textbf{Boltzmann sampling and optimal exact-size sampling for directed acyclic graphs}}
\author{Wojciech Gabryelski}{Department of Fundamentals of Computer Science, Wroc{\l}aw University of Science and Technology, Poland}{wojciech.gabryelski@pwr.edu.pl}{https://orcid.org/0009-0008-0333-167X}{}
\author{Zbigniew {Go{\l}\k{e}biewski}}{Department of Fundamentals of Computer Science, Wroc{\l}aw University of Science and Technology, Poland}{zbigniew.golebiewski@pwr.edu.pl}{https://orcid.org/0000-0002-0186-5263}{}
\author{Martin Pépin}{Université Caen Normandie, ENSICAEN, CNRS, Normandie Univ, GREYC UMR 6072, F-14000
Caen, France}{martin.pepin@unicaen.fr}{https://orcid.org/0000-0003-1892-3017}{}
\authorrunning{W. Gabryelski, Z. {Go{\l}\k{e}biewski}, M. P\'epin}
\Copyright{Wojciech Gabryelski, Zbigniew {Go{\l}\k{e}biewski} and Martin P\'epin}

\funding{\emph{W. Gabryelski, Z. {Go{\l}\k{e}biewski}, M. P\'epin:} This research was funded in whole or in part by National Science Centre, Poland, grant OPUS-25 no 2023/49/B/ST6/02517 and by the \textsc{anr-fwf} project PAnDAG ANR-23-CE48-0014-01.}

\ccsdesc[100]{\#10010064 Generating random combinatorial structures, \#10010917 Graph algorithms, \#10003629 Generating functions} 

\keywords{Directed acyclic graph, efficient uniform DAG generation, graph decomposition, root-layering, exact size sampling.} 

\begin{document}

\maketitle

\begin{abstract}
    We propose two efficient algorithms for generating uniform random directed acyclic graphs, including an asymptotically optimal exact-size sampler that performs $\frac{n^2}{2} + o(n^2)$ operations and requests to a random generator. This was achieved by extending the Boltzmann model for graphical generating functions and by using various decompositions of directed acyclic graphs.
    The presented samplers improve upon the state-of-the-art algorithms in terms of theoretical complexity and offer a significant speed-up in practice.
\end{abstract}

\section{Introduction}

Random generation has many applications in various scientific areas. It is, for instance, used in physics for the simulation of phenomena such as the Ising
model~\cite{Velenik2009} or chord diagrams (which are linked with quantum field
theory)~\cite{CYZ2016}, in biology for the study of RNA secondary
structures~\cite{Ponty2006}, in software engineering for automated
testing \textit{a la} QuickCheck~\cite{CH2000} and fuzzing~\cite{peach}, in
algorithmics for computing the volume of convex polytopes~\cite{LV2006} or
the permanent of matrices~\cite{JS1989} using randomised algorithms, or in
mathematics and theoretical computer science as an experimentation
tool~\cite{BM2022}.

Due to the widespread use of random generation, generic approaches and
frameworks for the design of samplers have been developed. One of the earliest
such framework is that of the so-called ``recursive-method'' --- introduced
in~\cite{NW1978} and later systematised in~\cite{FZV1994} --- where the idea is
to rely on counting sequences to guide the various random choices made by the
sampling algorithm.
Another frequently used approach for random generation is the Markov chain method~\cite{habib1998probabilistic}, which involves
constructing a Markov chain whose states are the desired structures (\textit{e.g.}\ graphs, permutations, trees) and running a random walk on this state space until it converges to a stationary distribution, typically designed to be uniform.
Another important approach, upon which we build in this work, is the Boltzmann method described
in~\cite{BoltzSamp2004}. This method, which has undergone significant
improvements and generalisations since its
inception~\cite{FFP2007,PSS2008,BRS2012,BLR2015,BBD2018}, enables
the automatic derivation of efficient samplers from combinatorial specifications of
a size \emph{close} to a target size~$n$ and guarantees that two objects of
the same size are equiprobable.

In this paper, we focus on the particular problem of the uniform random
generation of Directed Acyclic Graphs (DAGs).
DAGs are a ubiquitous data structure in computer science. They appear naturally
as a result of tree compaction, for instance for the compression of XML
documents~\cite{BLMN2015}, or as a means of representing partial orders~\cite{KS2020},
typically in scheduling problems~\cite{CEH2019,CMPTVW2010}.
The combinatorial study of DAGs dates back at least to the early 1970s with
the work of Robinson~\cite{robinson1970enumeration,Robinson1973} and
Stanley~\cite{STANLEY1973171}.
The problem of efficient uniform DAG generation is more recent however. A first
algorithm was given in 2001 in~\cite{MELANCON2001202}, based on the Markov chain
approach, and motivated by information visualisation applications.
Since then, the problem of efficient DAG sampling has drawn significant
attention, particularly in statistics for inferring the structure of Bayesian
networks~\cite{KF2009}. A fruitful line of
work~\cite{Kuipers_2013,KM2017,TVK2020,KSM2022} pushed the limits of DAG
sampling from the~$O(n^5 \ln n)$ complexity of the initial
design~\cite{MELANCON2001202}%
\footnote{The original article claims a~$O(n^4)$ complexity that was later
disproved by~\cite{Kuipers_2013}.}
to a~$O(n^2)$ time complexity with quadratic pre-processing, thus achieving
near-optimal complexity.
Notably, this last algorithm is based on ideas from the recursive
method, although it was initially thought that this approach was slower than
Markov chain based techniques.

The first contribution of the present paper is to \textbf{extend the framework
of Boltzmann sampling to families of directed graphs}
and to demonstrate its effectiveness by applying it to DAG generation.
This work opens the way for largely automating the process of graph generation
without sacrificing performance, as we shall demonstrate.
The second contribution of this paper, made possible by our extension of the Boltzmann framework, is to close the gap between existing
approaches and the theoretically optimal complexity for DAG generation by providing
an asymptotically \textbf{optimal exact-size sampling algorithm}.
By optimal, we mean that our sampler consumes~$\frac{n^2}{2} + o(n^2)$ random bits on average, which is asymptotic to the entropy of the uniform distribution over size-$n$ DAGs.
Contrary to the state-of-the-art sampler from~\cite{TVK2020}, our algorithm does not require any preprocessing.
Furthermore, in terms of memory accesses, our algorithm can generate DAGs of size~$n$ by
performing a first~$O(n)$-time rejection phase, followed by a single pass over the adjacency matrix of the output, which induces only~$\frac{n^2}2 + O(n)$ memory accesses.

The paper is structured as follows.
In Section~\ref{sec:def_and_prop}, we recall the notion of the graphic generating function from~\cite{Robinson1973,Panafieu2019SymbolicMA} as well as earlier results on DAGs, and introduce a generalisation of the Boltzmann model for digraph families.
Then, in Section~\ref{sec:freesamplers}, we present two possible approaches to implementing Boltzmann samplers for DAGs. In addition to the algorithmic results, these two approaches offer complementary perspectives on the structure of random DAGs with different implications, which is why we chose to present both.
Building on our samplers from the previous section, we demonstrate in Section~\ref{sec:exactsize} how to leverage the Boltzmann framework to achieve optimal exact-size sampling for uniform DAGs.

In practice, the C/C++ implementation of our algorithms performs several orders of magnitude faster than the state-of-the-art sampler available at~\cite{ModularSamplerImpl} (also written in C++). Whereas their implementation samples a uniform DAG of size~$n=4096$ in around~3s, our fastest algorithm does so in about~$20$ms and can reach size~$n=200000$ in about~$2$s on a consumer-grade laptop.
Our implementations are available online at~\url{https://osf.io/g4xk8/overview?view_only=a7781b4b2be54b61b0087ee02a4fee5e} for others to reproduce our experiments, and a more thorough benchmark will be conducted in the near future.

\section{Definitions and basic properties}\label{sec:def_and_prop}

Let $\mathbb{R}$ and $\mathbb{Z}$ denote the sets of real numbers and integers, respectively. For $a, b \in \mathbb{Z}$ such that $a<b$, let $[a..b] = \{a,a+1,\dotsc,b\}$.
For a set $X$ and some functions $f,g: X \to \mathbb{R}$, we write $f \propto g$ if $f$ is proportional to $g$, which means that there exists a constant $c \neq 0$ such that $f(x) = c \cdot g(x)$ for any $x\in X$.
For a generating function $f(z)$, we denote the operation of extracting the coefficient of $z^n$ in the formal power series $f(z) = \sum_{n} f_n z^n$ by $[z^n] f(z)$.
By $\partial_x f(x,y)$, we denote the partial derivative of a function $f(x,y)$ with respect to $x$.

\subsection{Graphic generating functions}

For our purposes, we need to use an uncommon type of generating function called a \emph{graphic generating function} (GGFs). For any class $\G$ consisting of graphs, the graphic generating function of~$\G$ is given by the formula
\[
\ggf{G}(z,w,u)=\sum_{G\in\G} \frac{z^{\nbv G}w^{\nbe G}u^{\nbs G}}{(1+w)^{\binom{\nbv G}{2}} \nbv G!}~,
\]
where $\nbv G$ is the number of vertices of the graph $G$, $\nbe G$ is its number of edges, and $\nbs G$ is its number of sources. Whenever the variable $u$ is not of interest, we write $\ggf{G}(z,w)=\ggf{G}(z,w,1)$.

In particular, by $\DAG(z,w,u)$ we denote the graphic generating function of labelled directed acyclic graphs. GGFs were first introduced by Robinson in~\cite{Robinson1973} under the name of \emph{special generating functions} in order to derive exact and asymptotic enumeration formulas for DAGs. The more explicit name of \emph{graphic generating functions} dates back to at least 1995 in the work of Gessel~\cite{GESSEL1995257}, who used them to derive more recurrence formulas for DAGs.
What makes GGFs useful for the enumeration of graphical objects is that their product encodes the so-called ``arrow-product''.

\begin{definition}\label{def:arrow_product}
    The arrow product of two classes $\A$ and $\B$ of digraphs, denoted by~$\A \aprod \B$, is the class of all digraphs resulting from connecting any pair of graphs $(a,b)\in\A\times\B$ with an arbitrary number of directed edges from the vertices of graph $a$ to the vertices of graph $b$.
\end{definition}
An illustration of the arrow product is given in Figure~\ref{fig:arrowprod}.
This operation can be nicely expressed in the framework of graphic generating functions since
\begin{equation}\label{eq:arrowprod}
    (\ggf{A}\aprod\ggf{B})(z,w)=\ggf{A}(z,w)\cdot \ggf{B}(z,w)~.
\end{equation}
A proof of this property, as well as many applications of GGFs are available in~\cite{Panafieu2019SymbolicMA}.
\begin{figure}
   \centering
   \includegraphics[scale=1]{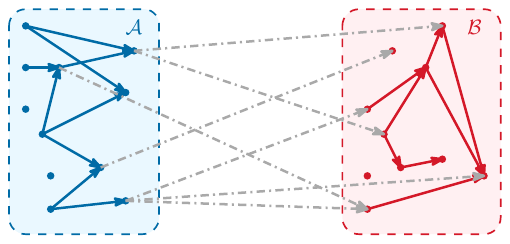}
   \caption{Illustration of the arrow product: the two graphs are drawn for some classes of digraphs~$\A$ and~$\B$ and the grey dotted edges are a possible set of edges from the~$\A$ component to the~$\B$ component introduced by the arrow product. Labels are omitted for the sake of clarity.}%
   \label{fig:arrowprod}
\end{figure}

As mentioned above, the graphic generating function of directed acyclic graphs is already known from the works of Robinson~\cite{Robinson1973}, but it was also independently obtained by Stanley~\cite{STANLEY1973171} using a different approach.
It was later used to study various parameters of DAGs, notably in~\cite{GESSEL1996253, Panafieu2019SymbolicMA, DdRRW2024}.
Let $\Set(z,w)$ be the graphic generating function of arcless graphs, given by the formula
\begin{equation}\label{eq:set}
    \Set(z,w)=\sum_{n\geq0}\frac{z^n}{(1+w)^{\binom{n}{2}}n!}~.
\end{equation}
The aforementioned studies have shown that
\begin{equation}\label{eq:dag}
    \DAG(z,w)=\frac{1}{\Set(-z,w)}
    \qquad
    \text{and}
    \qquad
    \DAG(z,w,u)=\frac{\Set((u-1)z,w)}{\Set(-z,w)}.
\end{equation}
An interesting property of the~$\Set$ function that we will use on several occasions in this paper is that
\begin{equation}\label{eq:set_first_derivative}
    \partial_z \Set(z, w) = \Set\left(\frac{z}{1+w}, w\right)\!.
\end{equation}

\subsection{A graphic Boltzmann model}

In~\cite{BoltzSamp2004}, the authors defined two variants of the Boltzmann model: the ordinary and exponential versions (corresponding to combinatorial classes with ordinary and exponential generating functions, respectively).
Since we are dealing with graphic generating functions, we need to introduce a natural extension of the Boltzmann model for digraphs.

\begin{definition}[Graphic Boltzmann model]\label{def:boltzmann}
    Given a class~$\mathcal G$ of digraphs, the graphic Boltzmann model on~$\mathcal G$ is the probability distribution assigned to any $G \in \mathcal{G}$ given by
    \begin{equation*}
        \PP_{\mathcal G, z,w,u}[G] = \frac{z^{\nbv G}w^{\nbe G}u^{\nbs G}}{(1+w)^{\binom{\nbv G}{2}}\nbv G!\ggf{G}(z,w,u)}\cdot
    \end{equation*}
    Whenever it is clear from the context, we will omit the~$\mathcal G$ subscript in order to make the notation lighter. The same applies to the variable $u$ when it is not of interest (we set $u=1$). We will also omit the term ``graphic'' in the rest of the paper since this is the only variant of the model that we use here.
\end{definition}

Our goal is to find a Boltzmann sampler $\Gamma\mathcal{DAG}(z,w)$, which is an
algorithm that, for given parameters $z$ and $w$, produces a random labelled
directed acyclic graph $G$ according to the corresponding Boltzmann model.
The parameters $z$ and $w$ influence the expected number of vertices and edges
in the generated DAG. Graphs with the same number of vertices and edges are
equally probable. Note that for $w=1$, the probability of generating a given DAG $G$ is
\[
\PP_{z,1}[G]=\frac{z^{\nbv G}}{2^{\binom{\nbv G}{2}}\nbv G!\DAG(z,1)}~,
\]
therefore all graphs with the same number of vertices are equally probable.

The graphic Boltzmann model introduced above has composition properties similar to those of the more classical ordinary and exponential Boltzmann models.
However, providing a comprehensive overview of graphic Boltzmann samplers is beyond the scope of this paper, as we will focus only on DAG sampling.

\subsection{A useful property of graphic generating functions}

Consider labelled digraphs that contain a fixed set of vertices $W$, and let $\C$ be any set of digraphs whose vertex set is exactly $W$.
We will refer to $\C$ as {\it possible} configurations.
Now, let $\G$ be a class of {\it possible} digraphs, that is:
\begin{itemize}
    \item for every $G \in \G$, the induced subgraph $G[W]$ is isomorphic to some $C \in \C$,
    \item let $C = (W, F) \in \C$ and $G = (V \dot\cup W, E \dot\cup F) \in \G$, then for all $C' = (W, F') \in \C \setminus \{C\}$, all graphs $G' = (V \dot\cup W, E \dot\cup F')$ also belong to $\G$.
\end{itemize}
Analogously, let  $\hat{\C} \subseteq \C$ be a set of permitted configurations and $\hat{\G}$ be a class of permitted digraphs in which only permitted configurations appear as induced subgraphs.
Then, according to the Boltzmann model, the probability of drawing a permitted graph that belongs to $\hat{\G}$ from all possible graphs $\G$ is equal to
\begin{equation}\label{eq:ggf_trick}
    \PP_{\G,z,w}[\hat{\G}]=\frac{\ggf{\hat{G}}(z,w)}{\ggf{G}(z,w)}=\frac{
    \sum_{\hat{C} \in \hat{\C}} w^{e(\hat{C})}
    }{
    \sum_{C \in \C} w^{e(C)}
    }~.
\end{equation}
For example, let $W$ be a fixed set of $n$ vertices, $\C$ contain all digraphs with the vertex set $W$, and we allow only one specific digraph $C$ with $j$ edges (that is, $\hat{\C} = \{C\}$). Then, the probability of sampling the allowed graph from the possible graphs $\G$ is given by $w^j / (1+w)^{n (n-1)}$.

\section{Two complementary approaches to Boltzmann sampling}\label{sec:freesamplers}

In this section, we present two distinct solutions to the problem of implementing a Boltzmann sampler for DAGs. One of our solutions is based on the notion of root-layering that Robinson used in \cite{robinson1970enumeration} to establish the first recurrence formulas for DAGs. This idea is natural in the context of DAGs, but requires us to design an \textit{ad-hoc} procedure in order to handle the layers in the Boltzmann model.
Our second solution relies on a new recursive decomposition of DAGs, reminiscent of peeling processes on maps, which avoids the use of inclusion-exclusion and thus allows to use more standard Boltzmann sampling techniques.

\subsection{Boltzmann sampling based on root-layerings}

\subsubsection{Unique decomposition of DAGs: root-layering}

In this section, we present a directed acyclic graph decomposition, which is a crucial observation for the construction of our Boltzmann sampler.
\begin{definition}
    A \emph{root-layering} $(V_1,\dotsc,V_k)$ of a directed acyclic graph $G=(V,E)$ is an ordered partition of the set of vertices $V$ such that for $i\in[1..k]$, the set $V_i$ consists of the sources of the graph resulting from the removal from the graph $G$ of all vertices belonging to the previous sets $V_1,\dotsc,V_{i-1}$.
\end{definition}
The example of root-layering is shown in Fig.~\ref{fig:graphs}.
\begin{figure}[H]
  \centering
  \includegraphics[scale=1]{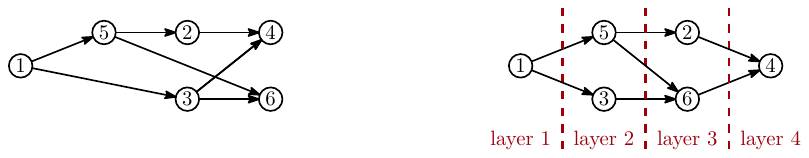}
  \caption{A labelled DAG with~$6$ vertices and its root-layering on the right.
  Here the layers are~$\{1\}$, $\{3, 5\}$, $\{2, 6\}$, and~$\{4\}$.}%
  \label{fig:graphs}
\end{figure}
\begin{theorem}\label{thm:root_layering}
    Given a directed acyclic graph $G=(V,E)$, let $(V_1,\dotsc,V_k)$ be the root-layering of the graph $G$. Then $(V_1,\dotsc,V_k)$ is the only tuple $(U_1,\dotsc,U_l)$ of non-empty disjoint sets summing up to $V$ that satisfies
    \begin{enumerate}
        \item (disallowed backward edges) $(\forall i,j\in[1..l])(i\leq j\implies(U_j\times U_i)\cap E=\emptyset)$,
        \item (obligatory forward edges) $(\forall i\in[1..l-1])(\forall v\in U_{i+1})(\exists u\in U_i)((u,v)\in E)$.
    \end{enumerate}
    Furthermore, for given disjoint sets of vertices $V_1,\dotsc,V_k$ such that $V=V_1\cup\ldots\cup V_k$, and any set of edges $E\subseteq V\times V$, if the graph $G=(V,E)$ satisfies property 1, then $G$ is a directed acyclic graph.
\end{theorem}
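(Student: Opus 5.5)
We prove the three parts in the order: the ``furthermore'' claim, then that the root-layering satisfies properties 1 and 2, then uniqueness.

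\textbf{Acyclicity from property 1.} Suppose $G=(V,E)$ with $V=V_1\cup\dots\cup V_k$ a disjoint union satisfies property 1. For a vertex $v$ let $\ell(v)$ be the index $i$ with $v\in V_i$. Property 1 says every edge $(u,v)\in E$ has $\ell(u)<\ell(v)$: if $\ell(u)\ge \ell(v)$, the edge would lie in $V_{\ell(u)}\times V_{\ell(v)}$, which is forbidden. In particular there are no loops. Along any directed cycle $\ell$ would strictly increase all the way around, which is impossible. Hence $G$ is a DAG.

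\textbf{The root-layering satisfies 1 and 2.} Write $G_i=G-(V_1\cup\dots\cup V_{i-1})$. Each $G_i$ is a nonempty DAG as long as vertices remain, so it has a source, $V_i\neq\emptyset$, and the process terminates. Hence the $V_i$ form an ordered partition into nonempty sets.

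For property 1, take $j\ge i$ and suppose $(x,y)\in E$ with $x\in V_j$, $y\in V_i$. Both $x$ and $y$ belong to $G_i$, since they were not removed before step $i$. But $y$ is a source of $G_i$, so it has no in-edge there, a contradiction. This also rules out edges inside a single layer.

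For property 2, take $v\in V_{i+1}$. Since $v\in G_i$ but $v\notin V_i$, $v$ is not a source of $G_i$, so it has an in-neighbour $u$ in $G_i$. Then $\ell(u)\ge i$, and property 1 forces $\ell(u)<\ell(v)=i+1$. So $u\in V_i$.

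\textbf{Uniqueness.} Let $(U_1,\dots,U_l)$ be any tuple of nonempty disjoint sets covering $V$ that satisfies 1 and 2. We show by induction on $i$ that $U_i=V_i$ and that $G-(U_1\cup\dots\cup U_{i-1})=G_i$.

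Assume this holds for all smaller indices, so the current graph is $H=G_i$ with vertex set $U_i\cup\dots\cup U_l$.
\begin{itemize}
\item Every vertex of $U_i$ is a source of $H$. By property 1, any edge into $U_i$ comes from some $U_m$ with $m<i$, and those vertices are not in $H$. Hence $U_i\subseteq V_i$.
\item Conversely, take $v\in U_j$ with $j>i$. Property 2 gives an in-neighbour of $v$ in $U_{j-1}$, and $U_{j-1}\subseteq H$ because $j-1\ge i$. So $v$ is not a source of $H$, which gives $V_i\subseteq U_i$.
\end{itemize}
Thus $U_i=V_i$. The induction stops exactly when the vertices are exhausted, so $l=k$.

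The main obstacle is the reverse inclusion $V_i\subseteq U_i$. This is the only step where property 2 is used, and it must be applied to every layer $U_j$ with $j>i$, not only to $U_{i+1}$. This works because each predecessor layer $U_{j-1}$ still lies in $H$, so $v$ keeps an in-edge inside $H$.
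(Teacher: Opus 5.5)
Your proof is correct and follows essentially the same route as the paper. It uses the same source-of-$G_i$ arguments for properties 1 and 2, the same two observations for uniqueness (property 1 makes every vertex of $U_i$ a source, and property 2 applied to $U_{j-1}$ rules out sources in later $U_j$), and the same strictly-increasing-layer-index argument for acyclicity. The differences are cosmetic: you phrase uniqueness as a direct induction, which also makes $l=k$ explicit, where the paper argues by contrapositive, and you obtain $u\in V_i$ in property 2 from property 1 rather than from $v$ being a source of $G_{i+1}$.
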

\begin{proof}
    For now, let us focus on the first part of the theorem. For $i\in[1..k]$, let $G_i$ be the subgraph of the graph $G$ induced by the set of vertices $V_i\cup\ldots\cup V_k$. Then $V_i$ is the set of sources of the graph $G_i$, thus for every $v\in V_i$ there is no vertex $u\in V_i\cup\ldots\cup V_k$ such that $(u,v)\in E$, which proves property 1. Fix $i\in[1..k-1]$ and some vertex $v\in V_{i+1}$. This vertex is the source of the graph $G_{i+1}$, but it is not the source of the graph $G_i$. Hence, there must exist some vertex $u\in V_i\cup\ldots\cup V_k$ such that $(u,v)\in E$, but this vertex does not belong to the set $V_{i+1}\cup\ldots\cup V_k$, so $u\in V_i$ must hold, which proves property 2.

    Now let $(U_1,\dotsc,U_l)$ be some tuple of non-empty disjoint sets summing up to $V$, which is not the root-layering of the graph $G$. For $i\in[1..l]$, let $G'_i$ be the subgraph of the graph $G$ induced by the set of vertices $U_i\cup\ldots\cup U_l$. Then there exists $i$ such that some source $v$ of the graph $G_i'$ does not belong to the set $U_i$ or there exists $v\in U_i$ such that $v$ is not a source of the graph $G_i'$. In the first case, there exists $j\in[i+1..l]$ such that $v\in U_j$. Since $v$ is the source of the graph $G_i'$ and $j-1\geq i$ (so the vertices of $U_{j-1}$ are also the vertices of the graph $G_i'$), there is no vertex $u\in U_{j-1}$ such that $(u,v)\in E$, so property 2 is not satisfied. In the second case, since the vertex $v\in U_i$ is not a source of the graph $G_i'$, there must exist some vertex $u\in U_i\cup\ldots\cup U_l$ such that $(u,v)\in E$, so property 1 does not hold.

    Let us move on to the second part of the theorem. Assume that property 1 holds. Fix $i\in[1..k]$ and $v_0\in V_i$. Let $v_0\to v_1\to\ldots\to v_j$ be a path in the graph $G$, and let $i_0,i_1,\dotsc,i_j$ be indices such that $v_0\in V_{i_0},v_1\in V_{i_1},\dotsc,v_j\in V_{i_j}$. Then, for property 1 to hold, we must have $i_0<i_1<\ldots<i_j$, which implies that $i_0\neq i_j$. This means that $v_0\neq v_j$ and the path $v_0\to v_1\to\ldots\to v_j$ is not a cycle. Therefore, no path in the graph $G$ forms a cycle, and thus $G$ is a directed acyclic graph.
\end{proof}

\subsubsection{Boltzmann sampler}

From the previous section, we can see that DAGs can be drawn as follows:
\begin{enumerate}
    \item generate layers of vertices,
    \item draw edges satisfying properties 1 and 2 from Theorem~\ref{thm:root_layering}, 
    \item assign random labels to the vertices.
\end{enumerate}
In this way, we will generate a specific DAG with the root-layering consisting of the layers generated in the first step.

To generate graphs according to the Boltzmann model, we need to derive a proper probability distribution for steps 1 and 2 of the above general schema.
The probability distribution of the sizes of consecutive sets in the root-layering will be shown in Lemmas~\ref{lem:root_layering_homogenous_markov_chain}-\ref{lem:root_layering_transition_matrix} and the probability distribution of the edges given the root-layering will be shown in Lemma~\ref{lem:root_layering_edges_distribution}.

First, let us determine the probability distributions of the sizes of sets in the root-layering of a random DAG. Let $N_1,N_2,\dotsc$ be random variables that denote the sizes of successive sets $V_1,V_2,\dotsc$ in the root-layering of a DAG drawn from the distribution given by the Boltzmann model. Assume that we have already drawn $k$ sets of sizes $n_1,\dotsc,n_k$. We need to determine the probability distribution of the size of the next set in the root-layering $N_{k+1}$ conditioned on $N_1=n_1,\dotsc,N_k=n_k$.

\begin{lemma}\label{lem:root_layering_homogenous_markov_chain}
    The process of generating sizes of consecutive sets in the root-layering is a time-homogeneous Markov chain, that is
    \begin{equation*}
        \PP_{z,w}[N_{k+1}=n_{k+1}|N_1=n_1,\dotsc,N_k=n_k]=\PP[N_2=n_{k+1}|N_1=n_k]~.
    \end{equation*}
\end{lemma}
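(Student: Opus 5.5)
We compute the joint law of $(N_1,\dots,N_k)$ and show that it factorises as a product of terms that each depend only on two consecutive sizes. For a DAG whose root-layering has sizes $n_1,\dots,n_k$ and total $n=\sum n_i$, the Boltzmann weight is $z^n w^{e(G)}/((1+w)^{\binom n2} n!\,\DAG(z,w))$. We sum over all such DAGs. First we choose the labelled layers, giving a multinomial $n!/(n_1!\cdots n_k!)$. By Theorem~\ref{thm:root_layering}, the DAGs with this layering are exactly the edge sets with no edge inside a layer or going backwards, in which every vertex of $V_{i+1}$ has at least one in-neighbour in $V_i$. Edges from $V_i$ to $V_j$ with $j\ge i+2$ are unconstrained. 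Summing $w^{e}$ over the edges therefore factorises as
\[
\prod_{i=1}^{k-1}\bigl((1+w)^{n_i}-1\bigr)^{n_{i+1}}\,(1+w)^{n_{i+1}(n_1+\dots+n_{i-1})}.
\]
Here each vertex of $V_{i+1}$ chooses a nonempty subset of $V_i$ (weight $(1+w)^{n_i}-1$) and an arbitrary subset of the earlier layers.

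Next we rewrite $(1+w)^{\binom n2}=\prod_i(1+w)^{\binom{n_i}2}\prod_{i<j}(1+w)^{n_in_j}$. Dividing, the arbitrary-edge factors cancel the pairs $(i,j)$ with $j\ge i+2$. The joint probability becomes
\[
\frac{1}{\DAG(z,w)}\prod_{i=1}^{k}\frac{z^{n_i}}{n_i!(1+w)^{\binom{n_i}2}}\prod_{i=1}^{k-1}\bigl(1-(1+w)^{-n_i}\bigr)^{n_{i+1}}.
\]
This is a Markov-type product with pair factors $T(a,b)=\bigl(1-(1+w)^{-a}\bigr)^{b}$ and single factors $s(a)=z^a/(a!(1+w)^{\binom a2})$. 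This bookkeeping of the $(1+w)$ exponents is the step needing care. The cancellation works precisely because nonconsecutive layers are free.

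The event $\{N_1=n_1,\dots,N_k=n_k\}$, including DAGs with more layers, has probability equal to the above prefix product for layers $1..k$, multiplied by $h(n_k)$. Here $h(a)$ is the sum over all continuations $(m_1,m_2,\dots)$ of $T(a,m_1)s(m_1)T(m_1,m_2)s(m_2)\cdots$, with the empty continuation contributing $1$. Hence $h(a)=1+\sum_{b\ge1}T(a,b)s(b)h(b)$. One must check that $h(a)$ is finite. This follows since the total sums to $\DAG(z,w)<\infty$ within the convergence domain.

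Taking the ratio of the probabilities for prefixes of length $k+1$ and $k$ gives
\[
\PP[N_{k+1}=b\mid N_1=n_1,\dots,N_k=n_k]=\frac{T(n_k,b)\,s(b)\,h(b)}{h(n_k)}.
\]
This depends only on $n_k$ and $b$, not on $k$. Applying the same computation with $k=1$ gives exactly $\PP[N_2=b\mid N_1=n_k]$, which proves the lemma. It also yields the transition kernel, with a possible "stop" state of probability $1/h(n_k)$.
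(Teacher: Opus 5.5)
Your proof is correct, and it takes a different route from the paper.

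\textbf{The paper's route.} The paper never writes down the joint law of the layer sizes. It works at the level of classes. The DAGs whose root-layering starts with $(n_1,\dots,n_{k+1})$ are the arrow product of two classes: the DAGs with layering exactly $(n_1,\dots,n_{k-1})$, and the DAGs whose layering starts with $(n_k,n_{k+1})$. This product is restricted by the factor $\bigl(1-(1+w)^{-n_{k-1}}\bigr)^{n_k}$ coming from \eqref{eq:ggf_trick}. By \eqref{eq:arrowprod} the GGF is then a product in which only the last factor depends on $n_{k+1}$. Hence the conditional law is proportional to the same function of $n_{k+1}$ as $\PP[N_2=\cdot\mid N_1=n_k]$.

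\textbf{Your route.} You do two things by hand: the relabelling, via the multinomial coefficient, and the cancellation of the free edges between non-consecutive layers against $(1+w)^{\binom n2}$, via explicit exponent bookkeeping. These are exactly what the arrow-product identity absorbs in the paper.

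\textbf{What each buys.} The paper's argument is shorter and showcases the GGF machinery that is the point of the paper. It leaves the terminating state $N_{k+1}=0$ and finiteness issues implicit. Your version is self-contained: it needs neither \eqref{eq:arrowprod} nor the somewhat informal \eqref{eq:ggf_trick}. It treats the stop state and the finiteness of $h$ explicitly; finiteness also uses $s(a)>0$, i.e.\ $z>0$. It also proves more, namely that the whole path law has product form and that the kernel is the $h$-transform $T(a,b)s(b)h(b)/h(a)$.

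Comparing $\PP[N_1=a]=s(a)h(a)/\DAG(z,w)$ with Lemma~\ref{lem:root_layering_sources_distribution} identifies $h(a)=\Set(-(1+w)^{-a}z,w)/\Set(-z,w)$, with $h(0)=1$ consistent with your stop probability $1/h(n_k)$. This recovers Lemma~\ref{lem:root_layering_transition_matrix} at once.
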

\begin{proof}
    For any DAG $G$, let $rl(G)$ be the sequence of the sizes of successive sets in the root-layering of $G$. For any sequence $\sigma=(n_1,\dotsc,n_k)$, let $\sigma\preceq rl(G)$ mean that the first $k$ sets in the root-layering of graph $G$ have sizes $n_1,\dotsc,n_k$. According to the Boltzmann model given in Definition~\ref{def:boltzmann}, the probability that the set $V_{k+1}$ has size $n_{k+1}$ provided that the previous sets have sizes $n_1,\dotsc,n_k$ is proportional to the graphic generating function of the class of all DAGs such that $(n_1,\dotsc,n_k,n_{k+1})\preceq rl(G)$. Each such DAG can be divided into three parts: the subgraph induced by the vertices $V_1\cup\ldots\cup V_{k-1}$, the subgraph induced by the remaining vertices, which has $n_k$ sources, and the edges between vertices $V_1\cup\ldots\cup V_{k-1}$ and the remaining vertices. Note that from Theorem~\ref{thm:root_layering} we know that all these edges are optional except the edges between sets $V_{k-1}$ and $V_k$, where every vertex from $V_k$ needs to be connected by an edge with at least one vertex from $V_{k-1}$. Hence, the class of all DAGs $G$ such that $(n_1,\dotsc,n_k,n_{k+1})\preceq rl(G)$ can be obtained by performing an arrow product on the class of all DAGs $G$ such that $(n_1,\dotsc,n_{k-1})=rl(G)$ and the class of all DAGs $G$ such that $(n_k,n_{k+1})\preceq rl(G)$, and discarding forbidden combinations of edges between sets $V_{k-1}$ and $V_k$. From \eqref{eq:ggf_trick} we can see that the generating function of this class can be obtained by multiplying the graphic generating function of the arrow product by the term
    $\left(\frac{(1+w)^{n_{k-1}}-1}{(1+w)^{n_{k-1}}}\right)^{n_k}$,
    since it corresponds to allowing only non-empty combinations of edges between $V_{k-1}$ and $v$ for each vertex $v\in V_k$. From the formula~\eqref{eq:arrowprod} for the graphic generating function of the arrow product, we obtain
    \begin{align*}
        &\PP_{z,w}[N_{k+1}=n_{k+1}|N_1=n_1,\dotsc,N_k=n_k]\propto\sum_{\substack{G\in\mathcal{DAG} \\ (n_1,\dotsc,n_k,n_{k+1})\preceq rl(G)}}\ggf{G}(z,w)\\
        &=\left(\frac{(1+w)^{n_{k-1}}-1}{(1+w)^{n_{k-1}}}\right)^{n_k}\left(\sum_{\substack{G\in\mathcal{DAG} \\ (n_1,\dotsc,n_{k-1})=rl(G)}}\ggf{G}(z,w)\right)\left(\sum_{\substack{G\in\mathcal{DAG} \\ (n_k,n_{k+1}) \preceq rl(G)}}\ggf{G}(z,w)\right)~,
    \end{align*}
    where $\ggf{G}(z,w) = \frac{z^{\nbv G}w^{\nbe G}}{(1+w)^{\binom{\nbv G}{2}} \nbv G!}$ denotes the graphic generating function of the class consisting of a single graph $G$. Note that only the last factor in the above expression depends on $n_{k+1}$, and the previous factors can be treated as constants. Thus
    \[
    \PP_{z,w}[N_{k+1}=n_{k+1}|N_1=n_1,\dotsc,N_k=n_k]\propto\sum_{\substack{G\in\mathcal{DAG} \\ (n_k,n_{k+1})\preceq rl(G)}}\frac{z^{\nbv G}w^{\nbe G}}{(1+w)^{\binom{\nbv G}{2}}\nbv G!}~,
    \]
    so the probability distribution does not depend on $n_1,\dotsc,n_{k-1}$ nor $k$, it only depends on $n_k$.
\end{proof}

Before determining the probability distribution of the sizes of consecutive sets in the root-layering, we will calculate the probability distribution of the number of sources in a random DAG.
\begin{lemma}[Probability distribution of sources]\label{lem:root_layering_sources_distribution}
    The probability distribution of the number of sources in a random directed acyclic graph drawn according to the Boltzmann model is given by the formula
    \begin{equation*}
        \PP_{z,w}[N_1=n]=\frac{z^n}{(1+w)^{\binom{n}{2}}n!}\Set\left(-(1+w)^{-n}z,w\right)~.
    \end{equation*}
\end{lemma}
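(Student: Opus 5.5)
Since $N_1$ is just the number of sources, the cleanest route is to read the distribution off the trivariate generating function $\DAG(z,w,u)$ from~\eqref{eq:dag}. By Definition~\ref{def:boltzmann} with $u=1$, summing over all DAGs with exactly $n$ sources gives
\[
\PP_{z,w}[N_1=n]=\frac{[u^n]\DAG(z,w,u)}{\DAG(z,w,1)}=[u^n]\,\frac{\Set((u-1)z,w)}{\Set(-z,w)}\cdot\Set(-z,w)=[u^n]\,\Set((u-1)z,w),
\]
using $\DAG(z,w)=1/\Set(-z,w)$. So everything reduces to extracting the coefficient of $u^n$ in $\Set((u-1)z,w)$.

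For the coefficient extraction I would Taylor-expand $\Set(\,\cdot\,,w)$ around the point $-z$ in the increment $uz$:
\[
\Set(-z+uz,w)=\sum_{n\ge 0}\frac{(uz)^n}{n!}\,\partial_z^n\Set(-z,w).
\]
Iterating the derivative identity~\eqref{eq:set_first_derivative}, the chain rule at each step contributes a factor $(1+w)^{-1}$, and by induction
\[
\partial_z^n\Set(x,w)=(1+w)^{-(0+1+\dots+(n-1))}\Set\!\left(\frac{x}{(1+w)^n},w\right)=(1+w)^{-\binom n2}\Set\!\left(\frac{x}{(1+w)^n},w\right).
\]
Substituting $x=-z$ gives
\[
[u^n]\Set((u-1)z,w)=\frac{z^n}{(1+w)^{\binom n2}n!}\Set\left(-(1+w)^{-n}z,w\right),
\]
which is the claimed formula.

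The only delicate step is justifying the Taylor expansion. For $w>0$, $\Set(\cdot,w)$ is entire, since its coefficients decay like $(1+w)^{-\binom n2}$, so expanding about any point is valid. The formula can alternatively be checked directly by expanding $(u-1)^m$ binomially and reindexing the double sum, using $\binom{n+j}{2}=\binom n2+\binom j2+nj$. That identity is exactly what produces the factor $(1+w)^{-n j}$, i.e.\ the argument $-(1+w)^{-n}z$. I expect this bookkeeping, rather than any conceptual point, to be the main obstacle, and would present the derivative-based route as the main argument.
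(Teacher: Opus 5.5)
Your proposal is correct and follows the paper's proof: both use \eqref{eq:dag} to reduce $\PP_{z,w}[N_1=n]$ to $[u^n]\Set((u-1)z,w)$, and you make explicit the coefficient extraction that the paper states without detail. One small wording slip: the $k$-th differentiation contributes a factor $(1+w)^{-k}$, not $(1+w)^{-1}$, but your displayed exponent $0+1+\dots+(n-1)=\binom n2$ is right.
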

\begin{proof}
 The probability distribution of the number of sources can be obtained using formulas \eqref{eq:set} and \eqref{eq:dag} and the following derivation:
 \[
    \PP_{z,w}[N_1=n] = \frac{[u^n]\DAG(z,w,u)}{\DAG(z,w)} = [u^n]\Set((u-1)z,w) = \frac{z^n \Set\left(-(1+w)^{-n}z,w\right)}{(1+w)^{\binom{n}{2}}n!}.\qedhere
 \]
\end{proof}

To complete the process of generating the root-layering, we need to determine the transition matrix of the Markov chain.
\begin{lemma}[Transition matrix]\label{lem:root_layering_transition_matrix}
    The entries of the transition matrix describing the process of generating consecutive set sizes in the root-layering are given by the following formula
    \begin{equation*}
        \PP_{z,w}[N_2=n_2|N_1=n_1]=\frac{\left((1-(1+w)^{-n_1})z\right)^{n_2}}{(1+w)^{\binom{n_2}{2}}n_2!}\frac{\Set(-(1+w)^{-n_2}z,w)}{\Set(-(1+w)^{-n_1}z,w)}~.
    \end{equation*}
\end{lemma}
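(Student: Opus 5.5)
We compute the joint law $\PP_{z,w}[N_1=n_1,N_2=n_2]$ and divide it by the marginal $\PP_{z,w}[N_1=n_1]$ from Lemma~\ref{lem:root_layering_sources_distribution}.

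\textbf{Structure of a DAG with prescribed first two layers.} Such a DAG consists of three parts:
\begin{itemize}
\item an arcless set $V_1$ of $n_1$ vertices;
\item a DAG $H$ on the remaining vertices, whose set of sources $V_2$ has exactly $n_2$ elements;
\item edges from $V_1$ to $V(H)$.
\end{itemize}
By Theorem~\ref{thm:root_layering}, these edges are free, except that each $v\in V_2$ must receive at least one edge from $V_1$. Without this constraint the class would be the arrow product of arcless graphs of size $n_1$ with DAGs having exactly $n_2$ sources. By \eqref{eq:arrowprod}, its GGF is
\[
\frac{z^{n_1}}{(1+w)^{\binom{n_1}{2}}n_1!}\,[u^{n_2}]\DAG(z,w,u).
\]
Applying \eqref{eq:ggf_trick} independently to each vertex of $V_2$, exactly as in the proof of Lemma~\ref{lem:root_layering_homogenous_markov_chain}, the constraint multiplies this by
\[
\bigl(1-(1+w)^{-n_1}\bigr)^{n_2}.
\]

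\textbf{Extracting the coefficient.} By \eqref{eq:dag},
\[
[u^{n_2}]\DAG(z,w,u)=\frac{[u^{n_2}]\Set((u-1)z,w)}{\Set(-z,w)}.
\]
The derivation in the proof of Lemma~\ref{lem:root_layering_sources_distribution} gives
\[
[u^{n_2}]\Set((u-1)z,w)=\frac{z^{n_2}}{(1+w)^{\binom{n_2}{2}}n_2!}\,\Set\bigl(-(1+w)^{-n_2}z,w\bigr).
\]
This identity comes from expanding $\binom{m}{n_2}$ and using $\binom{m}{2}=\binom{n_2}{2}+\binom{m-n_2}{2}+n_2(m-n_2)$; I would simply cite it.

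\textbf{Joint law and conditioning.} Dividing the GGF of the constrained class by $\DAG(z,w)=1/\Set(-z,w)$, the factor $\Set(-z,w)$ cancels. This gives
\[
\PP_{z,w}[N_1=n_1,N_2=n_2]=\frac{z^{n_1}}{(1+w)^{\binom{n_1}{2}}n_1!}\,\frac{\bigl((1-(1+w)^{-n_1})z\bigr)^{n_2}}{(1+w)^{\binom{n_2}{2}}n_2!}\,\Set\bigl(-(1+w)^{-n_2}z,w\bigr).
\]
Dividing by $\PP_{z,w}[N_1=n_1]$ from Lemma~\ref{lem:root_layering_sources_distribution} cancels the prefactor in $n_1$ and leaves $\Set(-(1+w)^{-n_1}z,w)$ in the denominator. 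This is exactly the claimed formula.

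\textbf{Main obstacle.} The only delicate point is justifying the arrow-product decomposition with the constraint. We need the map (arcless $V_1$, DAG $H$, admissible edge set) $\to$ DAG with first layers of sizes $(n_1,n_2)$ to be a bijection that respects relabelling. Theorem~\ref{thm:root_layering} gives this: $V_1$ must be exactly the sources of the whole graph, and $V_2$ the sources of $H$.

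As a sanity check, summing over $n_2$ should give $1$. Indeed, the sum equals
\[
\sum_{n_2}[u^{n_2}]\Set\bigl((u-1)z,w\bigr)\,\bigl(1-(1+w)^{-n_1}\bigr)^{n_2},
\]
which is $\Set$ evaluated at $\bigl(1-(1+w)^{-n_1}-1\bigr)z=-(1+w)^{-n_1}z$. This matches the denominator.
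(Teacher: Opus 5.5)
Your proposal is correct and follows the paper's proof essentially step for step. Like the paper, you write the class of DAGs whose first two root-layers have sizes $(n_1,n_2)$ as the arrow product of an arcless graph on $n_1$ vertices with the DAGs having exactly $n_2$ sources, multiply by $\bigl(1-(1+w)^{-n_1}\bigr)^{n_2}$ for the obligatory edges into $V_2$, and divide by $\PP_{z,w}[N_1=n_1]\,\DAG(z,w)$ using Lemma~\ref{lem:root_layering_sources_distribution}; your check that the probabilities sum to $1$ over $n_2$ is an extra the paper does not include.
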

\begin{proof}
    Let us note that the class of DAGs $G$ following $(n_1,n_2)\preceq rl(G)$ can be obtained by performing an arrow product on the class consisting of one empty graph with $n_1$ vertices and the class of all DAGs with $n_2$ sources, and then replacing all combinations of edges between the first two sets in a root-layering with all non-empty combinations of these edges. With this observation, the transition matrix, which is the probability distribution of $N_{k+1}$ conditioned on the value of $N_k$ for any $k\geq1$, can be calculated using the result of Lemma~\ref{lem:root_layering_sources_distribution}.
    \begin{align*}
        \PP_{z,w}[N_2=n_2|N_1=n_1] &= \frac{\PP_{z,w}[N_1=n_1,N_2=n_2]}{\PP_{z,w}[N_1=n_1]} \\
        &= \frac{\left(\frac{(1+w)^{n_1}-1}{(1+w)^{n_1}}\right)^{n_2}\frac{z^{n_1}}{(1+w)^{\binom{n_1}{2}}n_1!}[u^{n_2}]\DAG(z,w,u)}{\PP_{z,w}[N_1=n_1]\DAG(z,w)}\\
        &=\frac{\left((1-(1+w)^{-n_1})z\right)^{n_2}}{(1+w)^{\binom{n_2}{2}}n_2!}\frac{\Set(-(1+w)^{-n_2}z,w)}{\Set(-(1+w)^{-n_1}z,w)}~.\qedhere
    \end{align*}
\end{proof}

Assume that the root-layering has already been generated. Now we need to draw edges so that the properties presented in Theorem~\ref{thm:root_layering} are satisfied.
\begin{lemma}[Probability distribution of edges]\label{lem:root_layering_edges_distribution}
    Each edge between two non-consecutive sets in the root-layering occurs independently of other edges with probability $\frac{w}{1+w}$.
    For each two consecutive sets $V_k,V_{k+1}$ in the root-layering, the edges between $V_k$ and any vertex $v\in V_{k+1}$ can be drawn with probability $\frac{w}{1+w}$ until a combination of more than one edge is drawn.
\end{lemma}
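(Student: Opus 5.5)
The plan is to condition on the root-layering and read off the conditional law of the edge set directly from the Boltzmann weights. Fix the vertex set $V$ with $|V|=n$ and a labelled ordered partition $(V_1,\dotsc,V_k)$. Let $\mathcal E$ be the family of edge sets $E$ such that $(V,E)$ is a DAG whose root-layering is exactly $(V_1,\dotsc,V_k)$. By Theorem~\ref{thm:root_layering}, $\mathcal E$ consists precisely of the edge sets that satisfy property~1 (no edge from $V_j$ to $V_i$ with $i\le j$) and property~2 (every $v\in V_{i+1}$ has at least one in-neighbour in $V_i$). The second part of that theorem guarantees that every such $E$ does give a DAG, and the uniqueness part guarantees that its root-layering is $(V_1,\dotsc,V_k)$. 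So conditioning on the layering is the same as restricting to $E\in\mathcal E$.

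Within this conditioning, the number of vertices $n$ is fixed. By Definition~\ref{def:boltzmann}, the conditional probability of $E$ is therefore proportional to $w^{|E|}$ on $\mathcal E$. The key observation is that $\mathcal E$ is a product set. The candidate (forward) pairs $(u,v)$ with $u\in V_i$, $v\in V_j$, $i<j$ split into independent blocks:
\begin{itemize}
\item each single pair with $j\ge i+2$, which is unconstrained;
\item for each $i$ and each $v\in V_{i+1}$, the block $V_i\times\{v\}$, which is constrained only to be non-empty.
\end{itemize}
Property~2 involves each block separately, and property~1 only removes backward pairs. Hence $\mathcal E=\prod_{\text{blocks }B}\mathcal E_B$. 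Since $w^{|E|}=\prod_B w^{|E\cap B|}$, the conditional law factorises into independent laws on the blocks. This is exactly the reasoning behind~\eqref{eq:ggf_trick}, applied block by block.

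For an unconstrained pair, the law is proportional to $1$ (absent) or $w$ (present), which gives probability $\frac{w}{1+w}$. For a block $V_i\times\{v\}$, the law on subsets $S\subseteq V_i$ is proportional to $w^{|S|}$ restricted to $S\ne\emptyset$. This is the product Bernoulli$(\frac{w}{1+w})$ law conditioned on being non-empty, with normaliser $(1+w)^{n_i}-1$, consistent with the factor appearing in Lemma~\ref{lem:root_layering_transition_matrix}.

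The final step is the standard rejection argument: repeatedly drawing independent Bernoulli$(\frac{w}{1+w})$ edges from $V_i$ to $v$ until the drawn combination is non-empty produces exactly the conditioned law. I read the statement's ``more than one edge'' as ``at least one edge'', since that is the constraint imposed by property~2.

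The only delicate point is the product-structure claim. One has to check that no edge choice interacts across blocks, in particular that the acyclicity and root-layering constraints are fully captured by properties~1 and~2. This is where both directions of Theorem~\ref{thm:root_layering} are used. Labels and the layer sizes play no role here, since they are fixed by the conditioning.
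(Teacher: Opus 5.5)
Your proof is correct and is essentially the paper's argument. The paper also works conditionally on the layering and uses \eqref{eq:ggf_trick} to show two things, each independently of the edges already drawn: each non-consecutive pair is present with probability $\frac{w}{1+w}$, and each block $V_k\times\{v\}$ takes a given non-empty configuration with $j$ edges with probability $\frac{w^j}{(1+w)^{|V_k|}-1}$. This is a sequential phrasing of your product-set factorisation of $w^{|E|}$, and you are somewhat more explicit than the paper in using both directions of Theorem~\ref{thm:root_layering} to identify the conditioned set.

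Your reading of ``more than one edge'' as ``at least one edge'' is right. It matches the paper's own proof and the stopping condition $A\neq\emptyset$ in Algorithm~\ref{algo:boltz:root_layering}.
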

\begin{proof}
    Assume that we may already have drawn some edges in the generated graph. Let $\G$ be the class of all graphs that may be generated from now on. Let $e$ be a pair of vertices from non-consecutive sets in the root-layering, such that we have not yet decided whether an edge between those vertices should occur in the generated graph. Let $\hat{\G}$ be the class of graphs from $\G$ that contains the edge $e$. From \eqref{eq:ggf_trick}, we know that the probability that $e$ will occur in the generated graph is $\frac{\ggf{\hat{G}}(z,w)}{\ggf{G}(z,w)}=\frac{w}{1+w}$, which means that it is independent of the edges that have already been generated. Hence, each such edge can be generated independently with a probability $\frac{w}{1+w}$.

    Consider the edges between the vertices of some (not the last) set $V_k$ of size $n$ in the root-layering and any vertex $v\in V_{k+1}$, and assume that we have not yet decided on the presence of these edges in the generated graph. Fix some subset $U\subseteq V_k$ of size $j$. Now let $\hat{\G}$ be the class of those graphs from $\G$ in which $U$ is the set of all vertices from $V_k$ connected to the vertex $v$ by an edge. By formula \eqref{eq:ggf_trick}, the probability that the generated graph is one of the graphs from $\hat{\G}$ is $\frac{\ggf{\hat{G}}(z,w)}{\ggf{G}(z,w)}=\frac{w^j}{(1+w)^n-1}$. Again, this probability is independent of the edges already generated. Moreover, $\frac{w^j}{(1+w)^n-1}$ is the probability of obtaining a fixed combination of $j$ successes in $n$ Bernoulli trials with a probability of success $\frac{w}{1+w}$, conditional on at least one success occurring. Therefore, we can sample the edges from the set $V_k$ to the vertex $v$, each with a probability $\frac{w}{1+w}$ until some edge is present.
\end{proof}

The Algorithm~\ref{algo:boltz:root_layering} presents the complete process of generating a directed acyclic graph from the Boltzmann model using root-layering decomposition.

\begin{algorithm}[!ht]
    \DontPrintSemicolon
    \SetKwFunction{GDAG}{$\Gamma\mathcal{DAG}_1$}
    \SetKwProg{Fn}{function}{}{}%

    \Fn{\GDAG{$z,w$}}{draw $n$ according to the distribution $\PP[n]=\frac{z^{n}}{(1+w)^{\binom{n}{2}}n!}\Set\left(-\frac{z}{(1+w)^n},w\right)$ \label{line:boltz:root_layering:1}\;
    create a set $U$ of $n$ vertices of a DAG\;
    $(V, E) \gets (\emptyset, \emptyset)$\;
    \While{$n\neq 0$}{
        draw a number $k$ according to the distribution $\PP[k|n]=\frac{\left((1-(1+w)^{-n})z\right)^{k}}{(1+w)^{\binom{k}{2}}k!}\frac{\Set(-(1+w)^{-k}z,w)}{\Set(-(1+w)^{-n}z,w)}$ \label{line:boltz:root_layering:2}\;
        create a set $W$ of $k$ vertices of a DAG\;
        \For{$v\in W$}{
            \For{$u\in V$}{
                add the edge $(u,v)$ to $E$ with probability $\frac{w}{1+w}$\;
            }
            \Repeat{$A\neq\emptyset$}{
                $A=\emptyset$\;
                \For{$u\in U$}{
                    add the edge $(u,v)$ to $A$ with probability $\frac{w}{1+w}$\;
                }
            }
            $E\gets E\cup A$\;
        }
        $(V,U,n)\gets(V\cup U,W,k)$\;
    }
    randomly relabel vertices in $V$\;
    \KwRet{$(V,E)$}\;}
    \caption{Boltzmann sampler for DAGs based on the root-layering}%
    \label{algo:boltz:root_layering}
\end{algorithm}
\begin{lemma}[Correctness of Algorithm~\ref{algo:boltz:root_layering}]\label{lem:root_layering:ok}
    Algorithm~\ref{algo:boltz:root_layering} is a Boltzmann sampler for directed acyclic graphs.
\end{lemma}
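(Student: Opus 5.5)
We will prove that the output $(V,E)$ of Algorithm~\ref{algo:boltz:root_layering} has law $\PP_{z,w}[G]$ for every labelled DAG $G$. The plan is to factor the probability of producing $G$ along the three phases of the algorithm: the layer sizes, the edges, and the labels. By the first part of Theorem~\ref{thm:root_layering}, every DAG $G$ has a unique root-layering $(V_1,\dotsc,V_k)$. By its second part, every edge set produced by the algorithm (no backward edges, and at least one edge from $V_i$ into each vertex of $V_{i+1}$) yields a DAG whose root-layering is exactly the generated sequence of layers. Hence the algorithm outputs a DAG, and the event ``output $=G$'' coincides with ``the generated layer sizes are $(n_1,\dotsc,n_k)=rl(G)$, the generated edges match $E(G)$ under the vertex correspondence, and the relabelling gives the labels of $G$''.

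\textbf{Step 1 (layer sizes).} Line~\ref{line:boltz:root_layering:1} draws $n_1$ with the law of Lemma~\ref{lem:root_layering_sources_distribution}. Line~\ref{line:boltz:root_layering:2} draws each next size with the kernel of Lemma~\ref{lem:root_layering_transition_matrix}. The loop stops when the kernel draws $0$, which corresponds to the event that no further layer exists. By Lemma~\ref{lem:root_layering_homogenous_markov_chain}, the sequence of sizes under the Boltzmann model is exactly this time-homogeneous Markov chain. Therefore the probability of drawing $(n_1,\dotsc,n_k,0)$ equals $\PP_{z,w}[rl(G)=(n_1,\dotsc,n_k)]$. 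For an explicit check, I would multiply the kernels and observe that the $\Set$ factors telescope:
\[
\frac{z^{n}}{(1+w)^{\sum_i\binom{n_i}{2}}\prod_i n_i!}\prod_{i<k}\bigl(1-(1+w)^{-n_i}\bigr)^{n_{i+1}}\,\Set(0,w)^{-1}\cdot\frac{1}{\Set(-z,w)}\cdot\Set(0,w),
\]
using $\Set(0,w)=1$ and $\DAG=1/\Set(-z,w)$ from \eqref{eq:dag}. Here $n=\sum_i n_i$.

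\textbf{Step 2 (edges).} Conditional on the layers, Lemma~\ref{lem:root_layering_edges_distribution} shows that the inner loops realise the conditional Boltzmann law on edges. Edges from non-consecutive layers are independent Bernoulli variables with parameter $w/(1+w)$. For consecutive layers, the repeat loop is rejection sampling, so it samples $n_i$ Bernoulli trials conditioned on being nonempty. Concretely, a fixed admissible edge set with $e$ edges has probability
\[
\frac{w^{e}}{(1+w)^{\#\text{pairs}}}\prod_{i<k}\bigl(1-(1+w)^{-n_i}\bigr)^{-n_{i+1}},
\]
where $\#\text{pairs}=\binom{n}{2}-\sum_i\binom{n_i}{2}$ counts the forward pairs. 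The factor $(1-(1+w)^{-n_i})^{-n_{i+1}}$ cancels the corresponding factor from Step~1, and the powers of $(1+w)$ combine to $(1+w)^{-\binom n2}$.

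\textbf{Step 3 (labels).} The unlabelled structure with its layered vertex sets is produced with probability $\frac{z^n w^{e}}{(1+w)^{\binom n2}\DAG(z,w)}\prod_i n_i!^{-1}$. The uniform relabelling must then be accounted for. There are $\prod_i n_i!$ ways the algorithm's internal vertex names within layers can map to the same labelled $G$, since the vertices inside a layer are interchangeable and created sets are unordered. The relabelling itself contributes $1/n!$. Multiplying gives $\PP_{z,w}[G]$ as in Definition~\ref{def:boltzmann}.

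I expect this label bookkeeping to be the main obstacle: it requires stating precisely how vertices are identified before relabelling so that the factor $\prod_i n_i!$ is counted exactly once. A secondary point is termination of the repeat loops and of the chain. The repeat loops terminate almost surely whenever $n_i\ge1$ and $w>0$. The chain terminates because the total probability sums to $1$, which holds by the computation above summed over all DAGs, provided $z$ lies within the convergence domain of $\DAG(z,w)$.
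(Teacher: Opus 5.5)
Your proposal is correct and follows the paper's route. The paper's proof simply invokes Lemmas~\ref{lem:root_layering_homogenous_markov_chain}--\ref{lem:root_layering_edges_distribution}; you do the same, and additionally make explicit the cancellation of the factors $(1-(1+w)^{-n_i})^{n_{i+1}}$ and the $\prod_i n_i!/n!$ label count.

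One slip: after telescoping in your Step~1 display, the surviving factor should be $\Set(-z,w)=1/\DAG(z,w)$, not $1/\Set(-z,w)$. It comes from the stopping kernel $\Set(-z,w)/\Set(-(1+w)^{-n_k}z,w)$, and it is the value your Step~3 correctly uses.
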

\begin{proof}
    It follows directly from the Lemmas~\ref{lem:root_layering_homogenous_markov_chain}-\ref{lem:root_layering_edges_distribution}.
\end{proof}

\begin{lemma}[Complexity of Algorithm \ref{algo:boltz:root_layering}]
  In order to produce a directed acyclic graph with~$n$ vertices, Algorithm~\ref{algo:boltz:root_layering} performs a quadratic number of calls to the random number generator and calculates the value of $\Set$ function $O(n)$ number of times.
\end{lemma}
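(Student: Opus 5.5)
We bound separately the three kinds of work in Algorithm~\ref{algo:boltz:root_layering}: drawing the layer sizes, drawing the edges, and the final relabelling. Throughout, $n$ is the number of vertices of the output and $k$ the number of layers. Since every layer except the terminating zero-size draw is non-empty, $k \le n$.

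\textbf{Evaluations of $\Set$.} Line~\ref{line:boltz:root_layering:1} and each execution of line~\ref{line:boltz:root_layering:2} need only finitely many values of $\Set$. A sample from $\PP[k\mid n]$ can be drawn by inversion: draw a uniform $U$ and accumulate the terms $\PP[0\mid n],\PP[1\mid n],\dots$ until the partial sum exceeds $U$. Each term needs $\Set(-(1+w)^{-k}z,w)$ for the candidate $k$. This looks like one evaluation per candidate, which could add up to more than $O(n)$ in total.

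I would handle this as follows. The normalising constant $\Set(-(1+w)^{-n}z,w)$ is computed once per layer. The candidate values $\Set(-(1+w)^{-j}z,w)$ for $j=0,1,\dots$ are cached and reused across layers, because the argument depends only on $j$. Every drawn layer size is at most $n$, so inversion never needs to examine candidates beyond $j=n+1$, apart from a constant overhead per draw. Hence at most $O(n)$ distinct values $\Set(-(1+w)^{-j}z,w)$ are ever needed, plus one normaliser per layer, giving $O(n)+k+1=O(n)$ evaluations in total. The geometric growth of the terms in $j$ (the factor $(1+w)^{-\binom{j}{2}}$) also makes the per-draw scan terminate quickly. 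This caching and truncation argument is the delicate point of the proof. If one insists that inversion may overshoot, I would compare with the overall size instead: the overshoot at each step is bounded by the next layer size plus one, and the sizes sum to $n$.

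\textbf{Random calls for edges.} For a non-consecutive pair, each candidate edge costs exactly one Bernoulli draw. There are at most $\binom{n}{2}$ such pairs, so this part is $O(n^2)$.

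For vertex $v$ in layer $i+1$ with $|V_i|=m$, the \textbf{repeat} loop runs a geometric number of rounds with success probability $p_m=1-(1+w)^{-m}\ge 1-(1+w)^{-1}=\frac{w}{1+w}$. Each round costs $m$ draws, so the expected cost is $m/p_m\le m(1+w)/w$. Summing over all $v$ and all layers gives at most $\frac{1+w}{w}\sum_i |V_i||V_{i+1}|\le \frac{1+w}{w}n^2$, which is quadratic for fixed $w$.

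\textbf{Remaining cost.} Each layer-size draw uses $O(1)$ uniform variates, contributing $O(n)$. The random relabelling uses $O(n)$ calls, for example via Fisher--Yates. Adding all contributions gives a quadratic number of calls to the random generator, as claimed. The main obstacle is the accounting of $\Set$ evaluations in the inversion step; the edge count is routine.
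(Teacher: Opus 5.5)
Your proof is correct and follows the paper's argument. The paper counts one Bernoulli draw per non-consecutive pair and a geometric number of rounds for consecutive layers; you make the constant $\frac{1+w}{w}$ explicit, which the paper leaves implicit. For $\Set$ the paper uses exactly your fallback: drawing an outcome $k$ costs $O(k)$ operations, so the total is $O(n_1+n_2+\dots)=O(n)$. The caching device is unnecessary, because sequential inversion returning $k$ examines only the $k+1$ candidates $0,\dots,k$, and $\sum_i (n_i+1)+1\le 2n+1$, so your worry that per-candidate evaluations could exceed $O(n)$ does not materialise.
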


\begin{proof}
For each edge, we draw a constant number of Bernoulli variables in expectation:
\begin{itemize}
    \item in the case of edges between vertices that lie in non-consecutive sets of root-layering, exactly one Bernoulli variable is drawn,
    \item in the case of edges between vertices that lie in consecutive sets $V_k$ and $V_{k+1}$ of root-layering, the number of Bernoulli variables drawn (conditioned on the cardinality of $V_k$) follows a geometric distribution.
\end{itemize}
Since there are $\binom{n}{2}$ possible edges in a DAG with $n$ vertices, generating edges takes a quadratic number of RNG calls on average.  

Following \cite[\S 5]{BoltzSamp2004}, we assume that the random variable with the outcome $k$ is drawn with the $O(k)$ number of real-arithmetic operations.
Since the generated DAG has $n$ vertices, the sum of the sizes of the generated layers is $n_1 + n_2 + \ldots = n$, and the total number of operations performed at line \ref{line:boltz:root_layering:1} and in all iterations of the while loop at line \ref{line:boltz:root_layering:2} by the RNG (which is of the same order as the total number of calls to the $\Set()$ function) is of $O(n_1 + n_2 + \ldots) = O(n)$.\footnote{In fact, it can be shown that the probability $\PP[k|n]$ at line $6$ is bounded by $\frac{c \cdot d^k}{k! (1+w)^{\binom{k}{2}}}$ for suitable constants $c$ and $d$. This implies that the expected size of the number of vertices in a layer is bounded by a constant.}\qedhere


\end{proof}

\subsection{Boltzmann sampling via a peeling process}

We now present an alternative approach to DAG sampling.
This second approach is based on a new recursive decomposition that can be expressed in terms of graphic generating functions without resorting to inclusion-exclusion.
A consequence of this property is that we can use standard Boltzmann techniques
in order to implement the sampler.

\subsubsection{Peeling process}

DAGs can be decomposed recursively via a peeling process: consider a specific
source of the DAG and split the DAG into two parts: the DAG induced by the
vertices accessible from our distinguished source and the DAG induced by the
other vertices.
Metaphorically, we ``peel-off'' the part of the DAG that is not accessible from
the distinguished source.
This is illustrated in Figure~\ref{fig:spec:peeling}, and the meaning of the formulas in the figure is provided in the combinatorial proof of Theorem~\ref{thm:spec:peeling} below.

\begin{theorem}[Peeling process]\label{thm:spec:peeling}
  The graphic generating function of directed acyclic graphs satisfies
  \begin{equation}\label{eq:peeling}
    u \partial_u \DAG(z, w, u)
    = uz \DAG\left(\frac{z}{1 + w}, w, u\right) \DAG\left(z, w, \frac{w}{1 + w}\right).
  \end{equation}
\end{theorem}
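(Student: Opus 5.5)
We prove \eqref{eq:peeling} combinatorially, by reading both sides as graphic generating functions of the same class. The left-hand side $u\partial_u\DAG(z,w,u)$ is the GGF of DAGs with one distinguished source, weighted by $u^{\nbs G}$. Given such a pair $(G,s)$, let $R$ be the set of vertices other than $s$ that are reachable from $s$, and let $O$ be the remaining vertices. We will show that $(G,s)$ corresponds bijectively to three pieces of data: a DAG on $O$, a DAG on $R$ together with a set of edges from $s$ to $R$ subject to a local constraint, and a set of edges between $O$ and $R$. We then translate each piece into a GGF factor, using \eqref{eq:arrowprod} and \eqref{eq:ggf_trick}.

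\textbf{Structural characterisation.} First we list the constraints on $(G,s)$.
\begin{itemize}
  \item No edge goes from $\{s\}\cup R$ to $O$, since otherwise the endpoint in $O$ would be reachable from $s$.
  \item No edge enters $s$, since $s$ is a source.
  \item Edges from $O$ to $R$ are unrestricted.
  \item Every source of $G[R]$ must receive an edge from $s$, while edges from $s$ to non-sources of $G[R]$ are optional.
\end{itemize}
For the last point, note that every vertex of $R$ is reachable inside $G[R]$ from some source of $G[R]$. Hence all of $R$ is reachable from $s$ if and only if every source of $G[R]$ is hit by an edge from $s$.

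Conversely, take any DAGs on $O$ and on $R$, any edges $O\to R$, and any edges $s\to R$ covering all sources of $G[R]$. The resulting graph is acyclic, because every edge goes within a part or follows the order $O$, then $s$, then $R$. Its set of vertices reachable from $s$ is exactly $R$, since no edge leaves $\{s\}\cup R$ towards $O$. So this is a bijection. Moreover, the sources of $G$ are $s$ together with the sources of $G[O]$, because every vertex of $R$ has an incoming edge.

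\textbf{Translation to GGFs.} We view $G$ as the arrow product of $G[O]$ with $G[\{s\}\cup R]$, and then discard the forbidden edges from $O$ to $s$.
\begin{itemize}
  \item By \eqref{eq:arrowprod} and \eqref{eq:ggf_trick}, forbidding the $|O|$ potential edges $O\to s$ multiplies the weight by $(1+w)^{-|O|}$. This turns the $O$-factor into $\DAG(z/(1+w),w,u)$, where $u$ still marks the sources of $G[O]$.
  \item For the $\{s\}\cup R$ part with $|R|=r$, we use $\binom{r+1}{2}=\binom r2+r$. The $r$ pairs $(s,v)$ with $v\in R$ then contribute the following: for a source $v$ of $G[R]$, the edge is forced and contributes $w/(1+w)$; for a non-source $v$, both choices are allowed and contribute $(1+w)/(1+w)=1$.
  \item Summing over $R$ therefore gives $z\,\DAG(z,w,w/(1+w))$, where $z$ accounts for $s$ itself. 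The extra factor $u$ marks $s$ as a source of $G$.
  \item Relabelling of vertices is handled automatically by the $\nbv G!$ normalisation in \eqref{eq:arrowprod}.
\end{itemize}
Multiplying these factors gives exactly the right-hand side of \eqref{eq:peeling}.

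The main obstacle is the bookkeeping of the $(1+w)$ powers. One has to check that the $(1+w)^{-|O|}$ correction attaches to the $O$-factor and not to the $R$-factor. One also has to check that the $r$ extra pairs coming from $\binom{r+1}{2}$ are exactly absorbed into the substitution $u\mapsto w/(1+w)$, leaving the $R$-factor's $z$ argument unchanged. I will verify both by a direct coefficient comparison on a single fixed graph $G$.
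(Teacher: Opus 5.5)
Your proposal is correct and is essentially the paper's combinatorial proof. It uses the same three-part decomposition of a source-marked DAG (unreachable part, marked source, reachable part): $z\mapsto z/(1+w)$ encodes the forbidden edges between the unreachable part and the marked source, and $u\mapsto w/(1+w)$ encodes the forced edges from the source to the sources of the reachable part. The only differences are cosmetic. You associate the arrow product as $G[O]\aprod(\{s\}\cup R)$, whereas the paper uses $(G_1\cup\{v\})\aprod G_2$, which is what later gives the $\mathcal H$-structures. The paper also checks the identity analytically from $\DAG(z,w,u)=\Set((u-1)z,w)/\Set(-z,w)$ and \eqref{eq:set_first_derivative}.
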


\begin{proof}
  Straightforward calculations yield
  \begin{equation*}
    u\partial_u \DAG(z, w, u)
    = uz \frac{\Set\left(\frac{(u-1)z}{1+w}, w\right)}{\Set(-z, w)}
    = uz \frac{\Set\left(\frac{(u-1)z}{1+w}, w\right)}{\Set\left(-\frac{z}{1+w}, w\right)}
         \frac{\Set\left(\frac{-z}{1+w}, w\right)}{\Set(-z, w)}.
  \end{equation*}
  And we conclude by observing that~$- \frac{1}{1+w} = \big(\frac{w}{1+w}
  -1\big)$.
\end{proof}

Of course, Theorem~\ref{thm:spec:peeling} would be of little interest without a combinatorial
interpretation. We provide an alternative proof here that highlights its
combinatorial meaning.

\begin{proof}[Combinatorial proof of Theorem~\ref{thm:spec:peeling}]
  Let~$G$ be a DAG with a marked source~$v$. It can be decomposed as follows.
  \begin{itemize}
    \item Denote by~$G_2$ the DAG induced by all the vertices that are reachable
      from the marked source (by following the directed edges), excluding the
      marked source~$v$.
    \item Denote by~$G_1$ the DAG induced by all the other vertices excluding~$v$.
    \item The initial (marked) DAG is obtained as a special instance of the
      arrow product of the three graphs~$\{v\}$,~$G_1$, and~$G_2$ where some edges are forced and some are forbidden.
      More specifically,~$G$ is obtained by performing the arrow product between~$G_1 \cup \{v\}$ and~$G_2$, while ensuring that there is an edge from~$v$ to every source of~$G_2$. This way there is no edge between~$v$ and~$G_1$ and all the vertices of~$G_2$ are reachable from~$v$.
  \end{itemize}
  The composition~$z=\frac{z}{1+w}$ in the product~$uz \DAG\left(\frac{z}{1+w}, w, u\right)$ captures the fact that there is no edge between the isolated vertex~$v$ (modelled by~$uz$) and the~$G_1$ component. More eloquently, the division of~$z$ by~$(1+w)$ means that we are ``removing'' one optional edge (specified by~$(1+w)$) for every vertex in the~$G_1$ component, which corresponds exactly to the optional edges encoded by the arrow product between~$uz$ and~$\DAG(z, w, u)$.
  Similarly, the composition~$u=\frac{w}{1+w}$ in the rightmost term of the equation represents the fact that for every source of the~$G_2$ component, we must have an edge coming from~$v$. Here again, the division by~$(1+w)$ inside the third argument of~$\DAG$ represents that we are ``removing'' one optional edge for each source, but then we multiply by~$w$. So, at the level of the specification, we replace every optional edge from~$v$ to a source by an obligatory one.
\end{proof}
\begin{figure}[htb]
  \centering
  \includegraphics[scale=1]{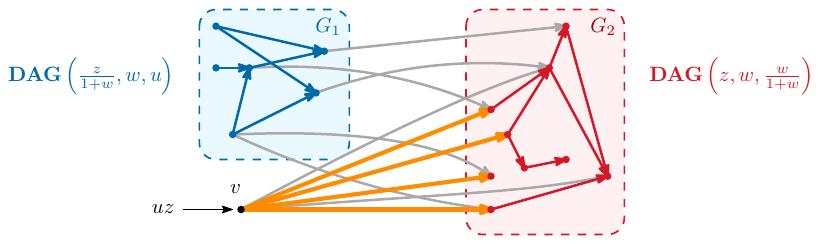}
  \caption{The ``peeling'' recursive decomposition of directed acyclic graphs.
  The labels are omitted.
  The marked source is the black vertex with the~$v$ label.
  Among the edges going from left to right, those starting from the marked
  source and going to a source of the graph on the right are always present and
  are depicted in thick orange, while all others are optional (depending on the
  initial DAG) and are depicted by thinner grey lines.}%
  \label{fig:spec:peeling}
\end{figure}

From the combinatorial interpretation provided above, we can now reformulate equation~\ref{eq:peeling} in a simple integral form:
\begin{equation}\label{eq:peelwithH}
  \DAG(z, w, u) = 1 + \ggf H(z, w, u) \DAG\bigl(z, w, \tfrac{w}{1+w}\bigr)
    \quad\!\!
    \text{with}
    \quad\!\!
    \ggf H(z, w, u) = z \!\!\int_0^u \!\!\!\DAG\bigl(\tfrac{z}{1+w}, w, t\bigr) dt.
\end{equation}
The effect of the integration on the combinatorial interpretation of this formula is that instead of considering arbitrarily pointed sources as the starting point of our decomposition, we choose the smallest of the sources.
This is a classical application of the box product from~\cite[\S II.6.3]{FS2009}, also used in~\cite{BFS1992} to enumerate increasing trees and in~\cite{BRS2012} to design Boltzmann samplers for them.
Combinatorially, the~$\mathcal H$-structures counted by the~$\ggf H$ function are the DAGs that can be obtained as the union of~$G_1$ and the isolated source~$v$ in the peeling process (see Figure~\ref{fig:spec:peeling}), but only when~$v$ has the smallest labels of all sources.

\begin{definition}
    An~$\mathcal H$-structure is a directed acyclic graph with a distinguished isolated source that has the smallest label of all the sources in the graph.
\end{definition}
The generation of~$\mathcal H$-structures is a central component of our peeling-based sampler.

\subsubsection{Boltzmann sampler}

First, by adapting the technique described in~\cite{BRS2012} for our use-case,
we can describe how to obtain a Boltzmann sampler of~$\mathcal H$-structures
from a DAG sampler, here denoted by~$\Gamma\mathcal{DAG}$.
In a second step, we will demonstrate how to perform the converse and thus define the two samplers in a mutually recursive fashion.

\begin{algorithm}[H]
  \DontPrintSemicolon
  \SetKwFunction{GH}{$\Gamma\mathcal H$}%
  \SetKwFunction{GDAG}{$\Gamma\mathcal{DAG}$}
  \SetKwProg{Fn}{function}{}{}%

  \Fn{\GH{$z, w, u$}}{
    $x \gets \text{uniform real number in~$[0; 1)$}$ \;
    compute the unique~$0 \le t \le u$ such that~$\frac{\Set((t-1)z, w) - \Set(-z, w)}{\Set((u-1)z, w) - \Set(-z, w)} = x$\label{line:boltz:H:solve} \;
    $G_1 \gets \text{\GDAG{$\frac{z}{1+w}, w, t$}}$\label{line:boltz:H:rec} \;
    $H \gets G_1 \cup \{\text{a fresh isolated vertex~$v$}\}$ \;
    choose a uniform label for~$v$ and relabel~$G_1$ accordingly \;
    perform a cyclic permutation of the labels of the sources of~$H$ so that~$v$
    has the smallest one \label{line:boltz:H:cycle} \;
    \KwRet{$H$}
  }
  \caption{Boltzmann sampler of~$\mathcal H$-structures}%
  \label{algo:boltz:H}
\end{algorithm}

\begin{lemma}[Correctness of Algorithm~\ref{algo:boltz:H}]\label{lem:H:ok}
  If~$\GDAG$ implements the Boltzmann model on directed acyclic graphs, then Algorithm~\ref{algo:boltz:H} implements the Boltzmann model on~$\mathcal H$. More precisely, 
  \begin{equation*}
    \PP\left[\GH{z, w, u} = H\right] = \frac{z^{\nbv H} w^{\nbe H} u^{\nbs H}}{n! (1+w)^{\binom n 2} \ggf H(z, w, u)}\cdot
  \end{equation*}
\end{lemma}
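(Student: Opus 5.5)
We compute the probability of producing a fixed $\mathcal H$-structure $H$ with $n$ vertices by following the random choices of Algorithm~\ref{algo:boltz:H}, and check that it equals the claimed expression. There are three stages: the continuous choice of $t$, the Boltzmann draw of $G_1$ at parameters $(\frac{z}{1+w},w,t)$, and the relabelling steps. Each will be handled in turn, and the density in $t$ will then be integrated out.

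\emph{Density of $t$.} Set $F(t)=\Set((t-1)z,w)-\Set(-z,w)$. This function is increasing on $[0,u]$ (for admissible parameters) and satisfies $F(0)=0$. Line~\ref{line:boltz:H:solve} is inverse-transform sampling, so $t$ has density $F'(t)/F(u)$ on $[0,u]$. By \eqref{eq:set_first_derivative},
\[
F'(t)=z\,\Set\bigl(\tfrac{(t-1)z}{1+w},w\bigr)=z\,\Set\bigl(-\tfrac{z}{1+w},w\bigr)\,\DAG\bigl(\tfrac{z}{1+w},w,t\bigr),
\]
using \eqref{eq:dag}. Integrating gives $F(u)=\Set(-\frac{z}{1+w},w)\,\ggf H(z,w,u)$, by the definition \eqref{eq:peelwithH} of $\ggf H$. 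The density of $t$ is therefore $z\,\DAG(\frac{z}{1+w},w,t)/\ggf H(z,w,u)$, which is the continuous analogue of the box-product construction of~\cite{BRS2012}.

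\emph{Probability of $G_1$.} Conditionally on $t$, a fixed DAG $G_1$ with $n-1$ vertices, $e$ edges and $s-1$ sources is returned with probability
\[
\frac{z^{n-1}w^{e}t^{s-1}}{(1+w)^{n-1}(1+w)^{\binom{n-1}{2}}(n-1)!\,\DAG(\frac{z}{1+w},w,t)}.
\]
Multiplying by the density of $t$ cancels the factor $\DAG(\frac{z}{1+w},w,t)$, and the identity $(1+w)^{n-1}(1+w)^{\binom{n-1}2}=(1+w)^{\binom n2}$ absorbs the extra power. Integrating over $t\in[0,u]$ gives
\[
\frac{z^{n}w^{e}\,u^{s}/s}{(1+w)^{\binom n2}(n-1)!\,\ggf H(z,w,u)}.
\]
Adding the isolated vertex $v$ creates no edge and one more source, so $H$ has $n$ vertices, $e$ edges and $s$ sources.

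\emph{Labels.} We must show that the relabelling contributes the factor $\frac{1}{n}\cdot s$, so that combined with the $1/s$ from the integral and the $(n-1)!$ above it yields $\frac{1}{n!}$. Choosing a uniform label for $v$ and relabelling $G_1$ order-preservingly is a bijection between (label of $v$, labelled $G_1$) and labelled DAGs with a distinguished isolated vertex, with weight $1/n$. The cyclic permutation of the source labels at line~\ref{line:boltz:H:cycle} is $s$-to-one onto $\mathcal H$-structures: the $s$ cyclic rotations of the sources' label set give $s$ preimages of each target $H$. The preimages all share the same $e$ and $s$, hence the same weight, so the factor $s$ appears. Only the shape of $H$ matters for these weights, not which source label $v$ carried before the cycle.

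I expect this last counting step to be the main obstacle. The difficulty is in making precise that the cyclic relabelling is exactly $s$-to-one and that it preserves $v$ being isolated. The rest is the mechanical calculation above, including the check that $F$ is monotone so that $t$ is well defined.
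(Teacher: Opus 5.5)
Your proposal is correct and follows essentially the same route as the paper: inverse-transform sampling of $t$ (which the paper writes as the change of variable $t=t(x)$), the Boltzmann weight of $G_1$ at $(\frac{z}{1+w},w,t)$ integrated over $t\in[0,u]$ to give $u^s/s$, a factor $1/n$ for the label of $v$, and exactly $s$ equally weighted preimages from the cyclic relabelling of the sources. Writing the density of $t$ directly as $z\,\DAG(\frac{z}{1+w},w,t)/\ggf H(z,w,u)$ is a slightly cleaner presentation, since it makes the cancellation with the Boltzmann weight of $G_1$ and the identification of $\ggf H$ immediate.
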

In fact, as shall be clear from the proof below, it suffices that~$\GDAG$ implements the Boltzmann model correctly on DAGs of size~$(n-1)$ to show that~$\GH$ produces~$\mathcal H$-structures of size~$n$ with the correct distribution.
This observation will allow us to use Lemma~\ref{lem:H:ok} as a template for proving the correctness of Algorithm~\ref{algo:boltz:peel} which is defined in a mutually recursive fashion with Algorithm~\ref{algo:boltz:H}.

\begin{proof}
  Consider an~$\mathcal H$-structure~$H$, denote by~$v$ its distinguished isolated source, and let~$n$, $m$, and~$k$ denote its number of vertices, edges and sources.
  Because of the way we make~$v$ the source with the smallest label at line~\ref{line:boltz:H:cycle}, if~$H$ has~$k$ sources in total, there are exactly~$k$ ways that Algorithm~\ref{algo:boltz:H} can produce~$H$, one for each cyclic permutation on its sources%
\footnote{This cycle trick ensures that~$v$ is the source with the smallest
label, while avoiding any bias. This is reminiscent from the cycle lemma used,
for instance, for Dyck path enumeration~\cite{DM1947}}.

  Let~$H^{(1)}, H^{(2)}, \ldots, H^{(k)}$ denote the~$k$ graphs that can be obtained by performing a cyclic permutation of the labels of the sources of~$H$, and for all~$1 \leq i \leq k$, let~$G_1^{(i)}$ denote the graph obtained by removing the distinguished isolated source from~$H^{(i)}$.
  
  In order to produce exactly~$H$, the function~$\Gamma\mathcal H(z, w, u)$ thus has to draw one of the~$G_1^{(i)}$ at line~\ref{line:boltz:H:rec} and then choose the unique label (with probability~$1/n$) that allows going from~$G_1^{(i)}$ to~$H$ after the cyclic relabelling.
  This probability can be expressed as
  \begin{equation*}
    \PP[\GH{z, w, u} = H] = 
    \sum_{i=1}^k \int_0^1 \PP\left[\GDAG\left(\frac z {1+w}, w, t(x)\right) = G_1^{(i)}\right] \frac 1 n dx
  \end{equation*}
  where~$t(x)$ is defined as in line~\ref{line:boltz:H:solve} of the algorithm.
  Assuming that~$\GDAG$ implements the Boltzmann distribution, the above formula can be rewritten as
  \begin{equation*}
      \frac{z^{n-1} w^m}{(n-1)! (1+w)^{\binom{n-1}2 + (n-1)}} \sum_{i=0}^k
      \int_0^1 \frac{t(x)^{k-1} dx}{\DAG\left(\frac{z}{1+w}, w,
      t(x)\right)}\cdot
  \end{equation*}
  By the change of variable~$t = t(x)$, we get that~$dx = \frac{z \Set\left((t-1)\frac{z}{1+w}, w\right) dt}{\Set((u-1)z, w) - \Set(-z, w)}$
  so that we have
  \begin{equation*}
    \PP[\GH{z, w, u} = H] = \frac{z^n w^m u^k}{n! (1+w)^{\binom n 2}} \frac{\Set(-\frac{z}{1+w}, w)}{\Set((u-1)z, w) - \Set(-z, w)}
  \end{equation*}
  and it can be checked, for instance, using equation~\eqref{eq:set_first_derivative} for integrating the~$\Set$ function, that
  \begin{equation*}
    \ggf H(z, w, u) = \frac{\Set((u-1)z, w) - \Set(-z, w)}{\Set(-\frac{z}{1+w}, w)}
  \end{equation*}
  which concludes the proof.
\end{proof}

Equipped with a template for sampling~$\mathcal H$-structures, it is now straightforward to derive a sampler for DAGs from equation~\eqref{eq:peelwithH}, which we describe in Algorithm~\ref{algo:boltz:peel}.
Note that~$\GH$ must use~$\Gamma\mathcal{DAG}_2$ internally here, so that both
functions are mutually recursive.

\begin{algorithm}[H]
  \DontPrintSemicolon
  \SetKwFunction{Bern}{Bernoulli}%
  \SetKwFunction{GDAG}{$\Gamma\mathcal{DAG}_2$}%
  \SetKwProg{Fn}{function}{}{}%

  \Fn{\GDAG{$z, w, u$}}{
    \eIf{\Bern{$1 / \DAG(z, w, u)$}\label{line:boltz:peel:bern}}{
      \KwRet{the empty graph} \;
    }{
      $H \gets \text{\GH{$z, w, u$}}$ \label{line:boltz:peel:rec1}\;
      $G_2 \gets \text{\GDAG{$z, w, \frac{w}{1+w}$}}$\label{line:boltz:peel:rec2} \;
      $G \gets H \cup G_2$, relabel accordingly \;
      \ForEach{pair of vertices~($v_1, v_2$) in~$H \times G_2$}{
        \eIf{$v_1$ is the distinguished source of~$G$ and~$v_2$ is a source of~$G_2$}{
          add an edge from~$v1$ to~$v_2$ \;
        }{
          add an edge from~$v_1$ to~$v_2$ with probability~$w / (w + 1)$ \;
        }
      }
      \KwRet{$G$}
    }
  }
  \caption{Boltzmann sampler for DAGs based on the peeling process}%
  \label{algo:boltz:peel}
\end{algorithm}

\begin{lemma}[Correctness of Algorithm~\ref{algo:boltz:peel}]\label{lem:peel:ok}
  Algorithm~\ref{algo:boltz:peel} is a Boltzmann sampler for directed acyclic graphs.
\end{lemma}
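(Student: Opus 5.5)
The plan is a strong induction on the number of vertices~$n$ of the target graph. The claim at size~$n$ is that for every DAG~$G$ with~$n$ vertices,~$m$ edges and~$k$ sources,
\[
\PP[\Gamma\mathcal{DAG}_2(z,w,u)=G]=\frac{z^n w^m u^k}{n!(1+w)^{\binom n2}\DAG(z,w,u)},
\]
uniformly in the parameters~$(z,w,u)$. The induction must be uniform in the parameters because the recursive calls use the modified values~$(\frac{z}{1+w},w,t)$ and~$(z,w,\frac{w}{1+w})$.

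The base case~$n=0$ is immediate. The empty graph is returned exactly when the Bernoulli trial at line~\ref{line:boltz:peel:bern} succeeds, which happens with probability~$1/\DAG(z,w,u)$, as required. Note also that the other branch never returns the empty graph.

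For~$n\ge1$, I use the combinatorial proof of Theorem~\ref{thm:spec:peeling}, restricted to the smallest source. This gives a bijection between nonempty DAGs~$G$ and triples~$(H,G_2,E')$, where
\begin{itemize}
  \item $H$ is the~$\mathcal H$-structure consisting of~$G_1$ together with the smallest source~$v$,
  \item $G_2$ is the DAG induced on the vertices reachable from~$v$ (excluding~$v$),
  \item $E'$ is the set of optional edges from~$H$ to~$G_2$, that is, all pairs except those from~$v$ to sources of~$G_2$, which are forced,
\end{itemize}
together with the choice of how the labels are split between the two parts. I fix the relabelling step ``relabel accordingly'' to choose a uniform subset of labels for~$H$, with probability~$\binom{n}{n_1}^{-1}$, where~$n_1$ and~$n_2$ are the sizes of~$H$ and~$G_2$. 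I need to check that this decomposition is inverted by Algorithm~\ref{algo:boltz:peel}. In the output,~$v$ is still the smallest source of the whole graph, since the sources of~$G$ are exactly those of~$H$ and every vertex of~$G_2$ is reached from~$v$. The reachable set from~$v$ is exactly~$G_2$: vertices of~$G_2$ are reached through the sources of~$G_2$, and no edge goes from~$v$ into~$G_1$, nor backwards. Hence each~$G$ arises from exactly one run-outcome of the triple and the label split.

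The probability computation then multiplies the following factors:
\begin{itemize}
  \item $1-1/\DAG(z,w,u)$ for the branch;
  \item the probability of~$H$, given by Lemma~\ref{lem:H:ok}, which is valid because~$\Gamma\mathcal H$ only calls~$\Gamma\mathcal{DAG}_2$ on graphs of size~$n_1-1<n$, so the induction hypothesis applies;
  \item the probability of~$G_2$ by the induction hypothesis, with~$n_2<n$ and~$u=\frac{w}{1+w}$;
  \item $w^{|E'|}(1+w)^{-(n_1n_2-s_2)}$ for the optional edges, where~$s_2$ is the number of sources of~$G_2$;
  \item $\binom{n}{n_1}^{-1}$ for the labels.
\end{itemize}
Putting these together, the term~$s_2$ cancels: $(\frac{w}{1+w})^{s_2}$ from~$G_2$ and~$(1+w)^{s_2}$ from the edges combine to~$w^{s_2}$, which accounts for the forced edges. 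The powers of~$(1+w)$ assemble via~$\binom{n_1}{2}+\binom{n_2}{2}+n_1n_2=\binom n2$, and the factorials combine into~$1/n!$. The edge count is~$m=m_H+m_2+s_2+|E'|$, and the sources of~$G$ are the~$k$ sources of~$H$. What remains is
\[
\frac{z^nw^mu^k}{n!(1+w)^{\binom n2}}\cdot\frac{\DAG(z,w,u)-1}{\DAG(z,w,u)\,\ggf H(z,w,u)\,\DAG(z,w,\frac{w}{1+w})},
\]
and equation~\eqref{eq:peelwithH} makes the second fraction equal to~$1/\DAG(z,w,u)$, closing the induction.

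The main obstacle I expect is the bijection and bookkeeping step. I need to make sure the decomposition is taken with respect to the \emph{smallest} source, consistently with the~$\mathcal H$-structure definition and the cyclic relabelling in Algorithm~\ref{algo:boltz:H}. I also need to confirm that the label-splitting convention of the arrow product matches the factor~$n_1!n_2!/n!$. I would also remark that the algorithm terminates almost surely: the total output size is finite with probability one, since the probabilities summed over all finite graphs equal~$1$, so the mutual recursion is well founded.
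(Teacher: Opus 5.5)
Your proof is correct and follows essentially the same route as the paper. It uses the same strong induction on the number of vertices, uniform in $(z,w,u)$, the same decomposition at the smallest source into $H$ and $G_2$, and the same product of the branch probability, Lemma~\ref{lem:H:ok}, the induction hypothesis at $u=\frac{w}{1+w}$, and the relabelling and edge factor, closed by \eqref{eq:peelwithH}; your explicit check that each $G$ arises from a unique run (its smallest source is $v$ and the set reachable from $v$ is exactly $G_2$) fills in a step the paper leaves implicit.
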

\begin{proof}
  For any~$z, w, u > 0$ such that~$\DAG(z, w, u)$ converges, and for any DAG~$G$, we denote by~$\PP_{z,w, u}[G]$ the probability that Algorithm~\ref{algo:boltz:peel} terminates and returns~$G$.
  We prove by induction on~$n$ that for all~$G \in \mathcal{DAG}$ with~$n$
  vertices and whenever this is well-defined,~$\PP_{z,w,u}[G] =
  {(1+w)}^{-\binom n 2} \frac{z^n w^{\nbe G} u^{\nbs G}}{n!\DAG(z, w, u)}$.
  The termination of the algorithm follows from the fact that these probabilities sum to~$1$.

  \begin{description}
    \item[Base case] We have that~$\PP_{z,w,u}[\emptyset] = 1 / \DAG(z, w, u)$.
    \item[Induction] Let~$G$ be a DAG with~$n > 0$ vertices, including~$k$
      sources and~$m$ edges. Let~$v$ denote the source of the smallest label
      in~$G$, and let~$G_2$ denote the graph induced by the vertices reachable
      from~$v$ (excluding~$v$). The graph induced by all other vertices
      (including~$v$) thus forms an~$\mathcal H$-structure~$H$.

      Assuming that the call to~$\GH$ at line~\ref{line:boltz:peel:rec1}
      and the recursive call at line~\ref{line:boltz:peel:rec2} return exactly~$H$ and~$G_2$, the
      probability that the subsequent lines produce exactly~$G$ is
      \begin{equation*}
        P_{\text{finish}} \coloneqq
        \frac{\nbv{H}! \nbv{G_2}!}{n!} \cdot
        \frac{w^{m - \nbe{H} - \nbe{G_2} - \nbs{G_2}}}%
             {{(1+w)}^{\nbv{H}\nbv{G_2} - \nbs{G_2}}}
      \end{equation*}
      where the first fraction accounts for the probability of choosing the
      right relabelling, and the second fraction quantifies the probability of
      drawing the correct edges from~$H$ to~$G_2$ in order to obtain exactly~$G$\footnote{Note that this is reminiscent of equation~\eqref{eq:ggf_trick} on page~\pageref{eq:ggf_trick}}.
      By induction, the recursive call to~$\GDAG$ hidden inside~$\GH$ at line~\ref{line:boltz:peel:rec1} and the recursive call at line~\ref{line:boltz:peel:rec2} implement the Boltzmann model on DAGs, so that~$\GH$ implements the Boltzmann model on~$\mathcal H$-structures, and we have
      \begin{align*}
        \PP_{z,w,u}[G]
        &=\left(1 - \frac{1}{\DAG(z, w, u)}\right)
          \PP_{\mathcal H, z, w, u}[H] \cdot
          \PP_{z,w, \frac{w}{1+w}}[G_2] \cdot
          P_{\text{finish}}
        \\
        &=\left(1 - \frac{1}{\DAG(z, w, u)}\right)
          \frac{z^n w^m u^k}{n! (1+w)^{\binom n 2}} \frac{1}{\ggf H(z, w, u) \DAG(z, w, \frac w {1+w})}
        \\
        &= \frac{z^n w^m u^k}{n! (1+w)^{\binom n 2} \DAG(z, w, u)}\cdot\qedhere
      \end{align*}
  \end{description}
\end{proof}

We express the complexity of Algorithm~\ref{algo:boltz:peel} in terms of the number of calls to the random number generator and the number of times we need to solve the equation inside the call to~$\GH$. The rest of the algorithm only consists of building the output graph, which is clearly linear in the size of the output\footnote{Note that if the output has~$n$ vertices, it has~$m = \Theta(n^2)$ edges, so ``linear'' actually means~$O(m)$ in this context.}.
\begin{lemma}[Complexity of Algorithm \ref{algo:boltz:peel}]\label{lem:peel:complexity}
  In order to produce a directed acyclic graph with~$n$ vertices, Algorithm~\ref{algo:boltz:peel} performs a quadratic number of calls to the random number generator and solves the system at line~\ref{line:boltz:H:solve} in Algorithm~\ref{algo:boltz:H} exactly~$n$ times.
\end{lemma}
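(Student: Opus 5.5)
The proof tracks the execution of Algorithm~\ref{algo:boltz:peel} on an output with~$n$ vertices through its recursion tree. The calls to~$\Gamma\mathcal{DAG}_2$ and~$\Gamma\mathcal H$ form a tree. Each internal call of~$\Gamma\mathcal{DAG}_2$, meaning one that fails the Bernoulli test at line~\ref{line:boltz:peel:bern}, makes exactly one call to~$\Gamma\mathcal H$. That call creates exactly one fresh vertex~$v$ and delegates the rest to a single recursive call of~$\Gamma\mathcal{DAG}_2$ on~$G_1$. The internal call then makes one more recursive call for~$G_2$. Every leaf call returns the empty graph.

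\textbf{Step 1: number of equation solves.} Every vertex of the output is created by exactly one call to~$\Gamma\mathcal H$, as the fresh isolated vertex~$v$. Conversely, each call to~$\Gamma\mathcal H$ creates exactly one vertex. Hence there are exactly~$n$ calls to~$\Gamma\mathcal H$. Each such call solves the equation at line~\ref{line:boltz:H:solve} once, giving exactly~$n$ solves. I would state this as a simple induction on~$n$ following the recursive structure, in parallel with the proof of Lemma~\ref{lem:peel:ok}.

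\textbf{Step 2: random calls outside the edge loop.} Each call to~$\Gamma\mathcal H$ uses a constant number of random draws: one uniform real~$x$ and one uniform label for~$v$. The cyclic permutation at line~\ref{line:boltz:H:cycle} is deterministic. Each call to~$\Gamma\mathcal{DAG}_2$ uses one Bernoulli draw. The number of calls to~$\Gamma\mathcal{DAG}_2$ is at most~$2n+1$: there are~$n$ internal ones, one per~$\Gamma\mathcal H$ call, and each internal node has two children in a tree with~$n$ internal nodes, so the calls number~$1+2n$. Relabelling when forming~$G = H \cup G_2$ can be done by drawing a uniform subset of labels of size~$\nbv{H}$. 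This costs~$O(\nbv{G})$ random draws per call, or~$O(1)$ if one merges labels lazily and performs a single random relabelling at the end. Either way I would argue this is at most~$O(n^2)$ in total, and in fact~$O(n)$ if done lazily.

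\textbf{Step 3: edge draws, the quadratic term and the main point.} In each internal call, the loop draws at most one Bernoulli variable per pair~$(v_1,v_2)\in H\times G_2$. The key observation is that across the whole recursion, each unordered pair of output vertices is examined at most once. Two vertices are compared only at the unique recursion node where they are first separated into the~$H$ side and the~$G_2$ side. Before that node they lie in the same subproblem, and afterwards they lie in disjoint subproblems. Vertices inside~$H$ are split between the fresh vertex~$v$ and~$G_1$, and that split generates no edges, since~$v$ is isolated from~$G_1$. Hence the total number of edge draws is at most~$\binom n2$. The main care needed is making this separation argument rigorous: one assigns to each recursive call the set of final output vertices it produces and checks that the pairs processed at each node are exactly the cross pairs. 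Summing Steps 2 and 3 gives~$\binom n2 + O(n)$, and in any case~$O(n^2)$, calls to the random generator, which together with Step 1 proves the lemma.
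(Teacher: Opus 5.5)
Your proof is correct and follows the same route as the paper: a constant number of random draws and exactly one solve per vertex, since each vertex is the fresh source of exactly one $\Gamma\mathcal H$ call, plus at most one Bernoulli draw per pair of vertices in the \texttt{foreach} loops, which gives at most $\binom n2$ edge draws. Your version is more careful than the paper's. The paper claims exactly $2n$ non-edge draws, which overlooks the $n+1$ Bernoulli tests that return the empty graph and the label draws, and it asserts the one-draw-per-pair bound without your separation argument.
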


\begin{proof}
  For each vertex that we build, we have to draw a Bernoulli variable beforehand (at line~\ref{line:boltz:peel:bern} and then a uniform real number~$x$ before solving the equation at line~\ref{line:boltz:H:solve} in Algorithm~\ref{algo:boltz:H}).
  This incurs exactly~$2n$ calls to the RNG and~$n$ solves.
  Moreover, we have to draw several Bernoulli variables in order to decide which edges to add or not in the \textbf{\texttt{foreach}} loop.
  This incurs at most~$\binom n 2$ RNG calls: one for each possible pair of vertices.
\end{proof}

\section{Exact size sampling}\label{sec:exactsize}

Both approaches presented above are amenable to quadratic-time exact-size sampling via rejection. Using the rejection principle to derive exact-size samplers is natural in the context of Boltzmann sampling, though it is generally inefficient.
In our case, we can still achieve good performance due to a shared feature of the two algorithms: we can postpone the generation of the edges.
An alternative approach to exact-size sampling that is permitted by the peeling process is to resort to a trick known as ``leapfrogging''.
This technique, described in \cite[\S 7.1]{BoltzSamp2004}, leverages the properties of super-critical sequences in order to derive a fast algorithm in this specific case.
Both approaches are presented below.

\subsection{Rejection sampling}\label{sec:rejection}

Here, the rejection method involves generating DAGs until a graph with the desired number of vertices is obtained. We can tune our algorithms to be more efficient for such generation. First, the only process we need to repeat is the generation of a graph skeleton without edges, as this is the part that establishes the number of vertices. In cases where we already know that the number of vertices will be greater than what we require, we can interrupt the process of generating the graph skeleton and start it over. Edge sampling occurs after skeleton sampling.

Fix the parameter $w$ (choose $w=1$ for the uniform generation of DAGs with the same number of vertices). Suppose that we want the generated graph to have exactly $n$ vertices, and we use the rejection method, \textit{i.e.} we choose a random graph until there are exactly $n$ vertices. For this method to be efficient, the probability of drawing a graph with $n$ vertices should be as high as possible. Hence, we are looking for the $z_n$ parameter that maximises the value
\[
\PP[\nbv{\Gamma\mathcal{DAG}(z_n,w)}=n]=[t^n]\frac{\DAG(z_nt,w)}{\DAG(z_n,w)}~.
\]
It is a well known fact in Boltzmann sampling that this parameter ensures that the expected size of the generated structures is equal to $n$, so $z_n$ can be calculated by solving the following formula
\[
n=\mathbb{E}[v(\Gamma\mathcal{DAG}(z_n,w))]=\frac{z_n \partial_z \DAG(z_n,w)}{\DAG(z_n,w)}~.
\]
The derivative of the function~$\DAG(z, w)$ can be easily computed from equations~\eqref{eq:dag} and~\eqref{eq:set_first_derivative}, which yields
\begin{equation}\label{eq:zn}
    n = \frac{z_n \partial_z \DAG(z_n, w)}{\DAG(z_n, w)} = \frac{z_n\,\Set\left(-z_n/(1+w),w\right)}{\Set(-z_n,w)}~.
\end{equation}
Solving the equation for $z_n$ can be accomplished using numerical methods.

Now, we would like to know how many iterations of the rejection sampler are needed, on average, to produce a random DAG with $n$ vertices. First, we need to obtain an asymptotic formula for $[z^n]\DAG(z,w)$. In \cite{wang2018zeros} it was shown that the function
$f(z)=\sum_{n\geq0}q^{\binom{n}{2}}\frac{z^n}{n!}$
for $q\in (0,1)$ has infinitely many zeros, all of which are real and negative. Let $\rho_w$ be the least zero of the function $f(-z)$ for $q=\frac{1}{1+w}$, which is the least zero of the function $\Set(-z,w)$ ($\rho_1 \approx 1.4880785$). Then $\rho_w$ is the radius of convergence of the function $\DAG(z,w)$. It can be shown that $\DAG(z,w)$ has a simple pole at $z = \rho_w$, and using Theorem~IV.10 from \cite{FS2009}, that
\begin{equation}\label{eq:dag_asymptotic}
    [z^n]\DAG(z,w)\overset{n\to\infty}{\sim}\frac{1}{\rho_w \Set(-\rho_w/(1+w),w)}\frac{1}{\rho_w^n}~.
\end{equation}
From \eqref{eq:set_first_derivative} we have
\[
\Set(-z_n,w)\overset{n\to\infty}{\sim}(z_n-\rho_w)\partial_z\Set(-z_n,w)=(\rho_w-z_n)\Set(-z_n/(1+w),w)~,
\]
and therefore, using~\eqref{eq:zn}, we obtain
\begin{equation}\label{eq:vertices_asymptotic}
    n=\frac{z_n\,\Set(-z_n/(1+w),w)}{\Set(-z_n,w)}\overset{n\to\infty}{\sim}\frac{\rho_w}{\rho_w-z_n}.
\end{equation}
Finally, from the above equations \eqref{eq:dag_asymptotic} and \eqref{eq:vertices_asymptotic}, we get
\begin{align*}
    &\PP[v(\Gamma\mathcal{DAG}(z_n,w))=n]=\frac{z_n^n[t^n]\DAG(t,w)}{\DAG(z_n,w)}\\
    &\overset{n\to\infty}{\sim}\frac{\Set(-z_n,w)}{\rho_w\Set(-\rho_w/(1+w),w)}\left(\frac{z_n}{\rho_w}\right)^n\overset{n\to\infty}{\sim}\frac{1}{n}\left(1-\frac{\rho_w-z_n}{\rho_w}\right)^{\frac{\rho_w}{\rho_w-z_n}}\overset{n\to\infty}{\sim}\frac{1}{e\cdot n}~.
\end{align*}
Hence, since the number of rejection sampler iterations has a geometric distribution with a probability of success asymptotically $\frac{1}{e\cdot n}$, the average number of these iterations approaches $e\cdot n$, making it linear with respect to $n$.

\subsection{Leapfrogging}

We proved that rejection sampling with our Boltzmann samplers achieves~$O(n^2)$ exact size sampling.
We can actually go further and optimise the constant hidden in the big-O using a technique called leapfrogging.

\subsubsection{Leapfrogging in a nutshell}

In the paper~\cite{BoltzSamp2004} where the framework of Boltzmann sampling is
introduced for the first time, the authors note in Section~7.1 a seemingly
anecdotal case where the method proves to be especially powerful.
When the combinatorial objects to be sampled admit a
specification of the form~$\A = \Seq\B$ where the generating function~$B(z)$
of~$\B$ is larger than~$1$ near its dominant singularity, the sequence is called \emph{supercritical}.
In this case, the generating function~$A(z)$ of~$\A$ admits a simple pole~$\rho$
where~$B(\rho) = 1$ and $B(z)$ is analytic in a neighbourhood of~$\rho$.
This allows us to show that a large~$\A$ structure in the Boltzmann model is composed
of a long sequence of small~$\B$ structures; however, it goes further than this.
Indeed, although the Boltzmann model is not well defined at~$z=\rho$
(because~$A(\rho) = \infty$), we can still define an early-interrupt
``critical'' Boltzmann sampler by generating a sequence of~$\B$ structures with
parameter~$\rho$ and halting as soon as a target size is attained.
The authors refer to this as the ``leapfrogging'' principle.
This is described in Algorithm~\ref{algo:genericleapfrog}.

\begin{algorithm}
  \DontPrintSemicolon
  \SetKwFunction{Bern}{Bernoulli}%
  \SetKwFunction{Frog}{$\Gamma_{\textrm{frog}}$}%
  \SetKwFunction{BoltzB}{$\Gamma\B$}%
  \SetKwProg{Fn}{function}{}{}%

  \Fn{\Frog{$n$}}{
    \Repeat{until~$\sum_{b \in S}|b| = n$}{
      $S \gets \textrm{empty sequence}$ \;
      \While{$\sum_{b \in S} |b| < n$}{
        $b \gets \text{\BoltzB{$\rho$}}$ \;
        append $b$ to $S$ \;
      }
      }
    \KwRet{$S$} \;
  }
  \caption{The leapfrogging algorithm for~$\A = \Seq\B$}%
  \label{algo:genericleapfrog}
\end{algorithm}

Note that the original paper~\cite{BoltzSamp2004} has a slightly more general
presentation that enables approximate-size sampling, however, we chose to focus on exact-size sampling only here.
Algorithm~\ref{algo:genericleapfrog} has a remarkable property.
Since the ``leaps'' --- the~$\B$-structures generated at each iteration --- are
small, the probability that the generated sequence inside the \texttt{\textbf{repeat}} block has size exactly~$n$ is actually a non-zero constant, even when~$n \to \infty$. As a consequence, the algorithm succeeds in generating an~$\A$ structure of size~$n$ after only~$O(1)$ rejections.
That is the idea that we leverage here to achieve optimal exact-size sampling of DAGs.

\subsubsection{A sequential specification of DAGs}

Recall that the peeling process described in Theorem~\ref{thm:spec:peeling} peels off an~$\mathcal H$-structure (by removing all vertices that are not reachable from the smallest source) and leaves us with a smaller DAG.
We can actually repeat this process on the resulting DAG, which yields the following alternative formulation.
\begin{lemma}[Sequential decomposition]\label{lem:spec:sequence}
  The graphic generating function of directed acyclic graphs satisfies
  \begin{equation*}
    \DAG(z, w, u) = 1 + \frac{\ggf H(z, w, u)}{1 - \ggf H\left(z, w, \frac{w}{1+w}\right)}
  \end{equation*}
\end{lemma}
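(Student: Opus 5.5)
The plan is to iterate the integral form \eqref{eq:peelwithH} of the peeling process, specialised at $u = \frac{w}{1+w}$. Write $D(u) = \DAG(z,w,u)$ and set $u_0 = \frac{w}{1+w}$. Evaluating \eqref{eq:peelwithH} at $u = u_0$ gives
\begin{equation*}
  D(u_0) = 1 + \ggf H(z,w,u_0)\, D(u_0).
\end{equation*}
This is a linear equation for the single unknown $D(u_0)$. Solving it yields
\begin{equation*}
  D(u_0) = \frac{1}{1 - \ggf H(z,w,u_0)},
\end{equation*}
provided $\ggf H(z,w,u_0) \neq 1$. Substituting this back into \eqref{eq:peelwithH} for general $u$ then gives exactly the claimed identity $D(u) = 1 + \ggf H(z,w,u)/(1-\ggf H(z,w,u_0))$.

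The only point needing care is the legitimacy of dividing by $1 - \ggf H(z,w,u_0)$. I would work at the level of formal power series in $z$. From its definition, $\ggf H(z,w,u) = z\int_0^u \DAG(\frac{z}{1+w},w,t)\,dt$ has no constant term in $z$. Hence $1 - \ggf H(z,w,u_0)$ is invertible as a formal power series, and the linear equation has the unique solution above. For the analytic statement, the same identity holds wherever both sides converge, i.e.\ for $|z| < \rho_w$. There $D(u_0)$ is finite, so $\ggf H(z,w,u_0) = 1 - 1/D(u_0) \neq 1$. I do not expect any real obstacle here.

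As a sanity check, and to match the combinatorial reading intended for leapfrogging, I would also give the combinatorial interpretation. Repeated peeling of the smallest source produces a sequence $H_1, H_2, \ldots, H_r$ of $\mathcal H$-structures. The first one carries the source-marking $u$, since its sources are sources of $G$. Every later one is peeled from a $G_2$-component, whose sources all receive a forced edge from the previous distinguished source; this is encoded by the substitution $u = u_0$. This decomposition is $\DAG = 1 + \mathcal H(u)\cdot \Seq{\mathcal H(u_0)}$, matching the formula.

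Alternatively, I could verify the identity directly using the closed form $\ggf H(z,w,u) = \frac{\Set((u-1)z,w)-\Set(-z,w)}{\Set(-\frac{z}{1+w},w)}$ from the proof of Lemma~\ref{lem:H:ok}. Here $u_0 - 1 = -\frac{1}{1+w}$, so
\begin{equation*}
  1 - \ggf H(z,w,u_0) = \frac{\Set(-z,w)}{\Set(-\frac{z}{1+w},w)},
\end{equation*}
and the right-hand side simplifies to $\Set((u-1)z,w)/\Set(-z,w)$, which equals $\DAG(z,w,u)$ by \eqref{eq:dag}.
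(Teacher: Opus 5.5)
Your proposal is correct and follows the paper's proof: substitute $u=\frac{w}{1+w}$ into \eqref{eq:peelwithH}, solve the resulting linear equation to get $\DAG\bigl(z,w,\frac{w}{1+w}\bigr)=1/\bigl(1-\ggf H(z,w,\frac{w}{1+w})\bigr)$, then substitute this back into \eqref{eq:peelwithH}. Your justification of the division (using formal power series, since $\ggf H$ has no constant term in $z$) and your direct check via the closed form of $\ggf H$ are both sound additions that the paper does not spell out.
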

\begin{proof}
    When substituting~$u=\frac{w}{1+w}$ in~\eqref{eq:peelwithH}, we get 
    \begin{equation*}
        \DAG\left(z, w, \frac{w}{1+w}\right) = 1 + \ggf H\left(z, w, \frac{w}{1+w}\right) \DAG\left(z, w, \frac{w}{1+w}\right)
        = \frac{1}{1 - \ggf H\left(z, w, \frac{w}{1+w}\right)}\cdot\qedhere
    \end{equation*}
\end{proof}
This new expression allows the familiar pseudo-inverse operation on series to appear, which is naturally interpreted as a \emph{sequence} of~$\mathcal H$-structures.
It is worth noting, though, that this sequence differs slightly from the usual notion of a sequence: here, there is an arrow product between each~$\mathcal H$-structure and all the following ones, and some of the edges are forced.
This sequential decomposition, or iterated peeling process, is illustrated in Figure~\ref{fig:spec:sequence}.

\begin{figure}[htb]
  \centering
  \includegraphics[width=0.8\textwidth]{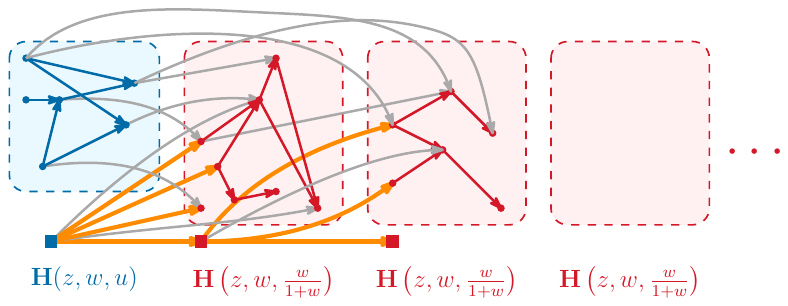}
  \caption{Iterated peeling process. Labels are omitted. At each decomposition
  step, the smallest of the sources is selected and the vertices that are not
  accessible from this source are peeled off, as well as the selected source.
  This process is iterated until the remaining DAG is empty. The~$\mathcal H$ structures are the union of the DAG component with the isolated vertex at each step.}
  \label{fig:spec:sequence}
\end{figure}

\subsubsection{Leapfrogging for DAGs}

The sequential decomposition from Lemma~\ref{lem:spec:sequence} is amenable to leapfrogging because of the following observation.
Regardless of the value of~$u \in (0;1]$, the function~$\ggf H(z, w, u)$ is analytic in the domain~$|z| < (1 + w)\rho_w$, which contains~$\rho_w$.
It follows that the sequence is super-critical and that~$\ggf H\bigl(\rho_w, w, \frac{w}{1+w}\bigr) = 1$.
Although the sequence in our case is not a sequence in the same sense as in~\cite{BoltzSamp2004}, the same ideas apply, and one can obtain a leapfrogging algorithm for DAGs by drawing sequences of~$\mathcal H$-structures until we find one of total size exactly~$n$. Note that in our case, all leaps but the first one must be drawn with parameter~$u=\frac{w}{1+w}$.

Note that the leapfrogging idea is generic in terms of which algorithm is used to generate the leaps, as long as it implements the Boltzmann model on~$\mathcal H$. In Algorithm~\ref{algo:leapfrog},~$\GH$ can use either of our two algorithms under the hood to generate DAGs.

\begin{algorithm}
  \DontPrintSemicolon
  \SetKwFunction{UnifDAG}{UnifDAG}%
  \SetKwProg{Fn}{function}{}{}%

  \Fn{\UnifDAG{$n$}}{
    \Repeat{$\sum_{H \in S} \nbv H \neq n$}{
      $S \gets \text{new sequence containing one graph drawn from \GH{$\rho_1, 1, 1$}}$ \;
      \While{$\sum_{H \in S} \nbv H < n$}{
        append a new~\GH{$\rho_1, 1, \tfrac 1 2$} to~$S$ \;
      }
    }
    shuffle the labels of the different leaps \label{line:frog:shuffle} \;
    add edges between each leap and its following leaps as in Algorithm~\ref{algo:boltz:peel}\label{line:frog:edges} \;
    \KwRet{the resulting DAG}
  }
  \caption{Exact-size sampler of DAGs. The special symbol~$\bot$ symbolises a failure.}%
  \label{algo:leapfrog}
\end{algorithm}
An important optimisation here is that we wait to find a correct sequence of~$\mathcal H$-structures before actually connecting the various components together.
This allows to lower the complexity of this algorithm because most of the computational cost lies in this second phase: it generates~$\Theta(n^2)$ edges by drawing about~$\frac{n^2}{2}$ Bernoulli variables and performing~$\Theta(n^2)$ memory accesses to store those edges.

\begin{lemma}[Correctness of Algorithm~\ref{algo:leapfrog}]\label{lem:leapfrog:ok}
  The function \UnifDAG from Algorithm~\ref{algo:leapfrog}, when called with parameter~$n$ returns a uniform directed acyclic graph with~$n$ vertices.
\end{lemma}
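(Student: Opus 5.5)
The plan is to compute, for a fixed DAG $G$ with $n$ vertices, the probability that \UnifDAG{$n$} outputs $G$, and to show that it depends only on $n$. Note that the repeat condition is meant as ``until the total size equals $n$''; the $\neq$ is a typo. Each DAG $G$ has a unique iterated peeling decomposition (Lemma~\ref{lem:spec:sequence} and Figure~\ref{fig:spec:sequence}). This gives a unique sequence $H_1,\dots,H_r$ of $\mathcal H$-structures, each relabelled canonically on $[1..\nbv{H_i}]$ in order-preserving fashion, together with the set partition of labels among the leaps and the set of optional edges between leaps. Forced edges go from each distinguished source to the sources of the later remaining DAG. Conversely, any such data produces a DAG whose peeling decomposition is exactly that data, by the combinatorial proof of Theorem~\ref{thm:spec:peeling} applied iteratively. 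So $G$ is output iff the accepted sequence is $(H_1,\dots,H_r)$, the shuffle at line~\ref{line:frog:shuffle} picks the right labels, and the edges at line~\ref{line:frog:edges} come out right.

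First I handle the rejection loop. The successive leaps are independent, the first drawn from the Boltzmann model on $\mathcal H$ with $u=1$ and the others with $u=\frac12$ (Lemma~\ref{lem:H:ok}, whose proof only needs a correct DAG sampler for smaller sizes). A given sequence with sizes summing to exactly $n$ is produced in one round with probability $\prod_i \PP_{\mathcal H,\rho_1,1,u_i}[H_i]$, because the while loop stops exactly after the last leap. Conditioned on acceptance, that sequence is therefore returned with probability proportional to this product. The normalising constant, namely the probability of success of a round, depends only on $n$. The one point to check is that this probability is positive, so the loop terminates a.s.; it is positive since a single leap of size $n$ has positive probability.

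Next I expand the product. By Lemma~\ref{lem:H:ok} with $w=1$, the product equals
\[
\prod_i \frac{\rho_1^{\nbv{H_i}}\, (1/2)^{\nbe{H_i}}\, u_i^{\nbs{H_i}}}{\nbv{H_i}!\,\ggf H(\rho_1,1,u_i)} .
\]
The denominators contribute $\ggf H(\rho_1,1,1)\,\ggf H(\rho_1,1,\tfrac12)^{r-1} = \ggf H(\rho_1,1,1)$, because $\ggf H(\rho_1,1,\frac12)=1$ by criticality. Note that the $(1+w)^{\binom{\cdot}{2}}$ factors become $2^{-\binom{\nbv{H_i}}{2}}$; I will keep these. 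The factor $\rho_1^{n}$ is constant. For $i\ge2$ the factor $(1/2)^{\nbs{H_i}}$ is paired with the forced edges from the previous distinguished source to the sources of the remaining DAG. This pairing requires care: the sources of the remaining DAG $H_i\cup\cdots\cup H_r$ are exactly the sources of $H_i$, since all later leaps receive forced edges from $H_i$'s distinguished source.

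Then I multiply by the shuffle probability $\prod_i \nbv{H_i}!/n!$, which cancels the $\nbv{H_i}!$. I also multiply by the edge probability between leaps, which is $2^{-(\#\text{optional pairs})}$ since each optional pair is decided with probability $\frac12$ regardless of outcome. The total exponent of $\frac12$ is then
\[
\sum_i \binom{\nbv{H_i}}{2} + \sum_{i\ge2}\nbs{H_i} + \#\text{optional pairs},
\]
and the sum of the last two terms equals the number of cross pairs $\sum_{i<j}\nbv{H_i}\nbv{H_j}$. Hence the exponent is $\binom n2$. Everything therefore collapses to $\rho_1^n/(n!\,2^{\binom n2}\,\ggf H(\rho_1,1,1)\,Z_n)$, which is independent of $G$, so the output is uniform.

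The main obstacle is the bijection and bookkeeping step. I need to verify that the sources counted by $u$ in each leap coincide exactly with the forced-edge targets. I also need the $u=1$ first leap not to receive a spurious $(1/2)^{\nbs{}}$, and to be careful about the $(1+w)$ normalisations in Lemma~\ref{lem:H:ok}. I would argue these points directly from the combinatorial proof of Theorem~\ref{thm:spec:peeling}.
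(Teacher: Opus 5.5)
Your proof is correct and follows essentially the same route as the paper. You use the product of $\mathcal H$-Boltzmann weights for the peeling sequence and $\ggf H(\rho_1,1,\tfrac12)=1$ to remove the dependence on the number of leaps, then the multinomial shuffle factor and $2^{-\#\text{optional pairs}}$, with the $(\tfrac12)^{\nbs{H_i}}$ weights pairing off against the forced edges so the exponent totals $\binom n2$. The paper also identifies your normalising constant as $\rho_1^n[z^n]\DAG(z,1,1)/\ggf H(\rho_1,1,1)$, which gives the explicit value $1/\bigl(n!\,2^{\binom n2}[z^n]\DAG(z,1,1)\bigr)$.

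One slip: in your displayed product, the factor $(1/2)^{\nbe{H_i}}$ should be $2^{-\binom{\nbv{H_i}}{2}}$, because $w^{\nbe{H_i}}=1$ at $w=1$. Your final exponent count already treats it this way, so nothing downstream changes.
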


\begin{proof}
  We first express the probability~$\PP_{\text{while}}[S]$ that a given sequence~$S = (H_1, H_2, \ldots, H_j)$ of~$\mathcal H$-structures, of total size~$n > 0$, is produced by the \textbf{\texttt{while}} loop. By the definition of the Boltzmann model on~$\mathcal H$, this probability is given by
  \begin{equation*}
    \PP_{\text{while}}[S] = \frac{\rho_1^{n_1}}{n_1! 2^{\binom {n_1} 2} \ggf H(\rho_1, 1, 1)}
    \prod_{i=2}^j \frac{\rho_1^{n_i} (\tfrac 1 2)^{k_i}}{n_i! 2^{\binom {n_i} 2} \ggf H(\rho_1, 1, \tfrac 1 2)}
    = \frac{\rho_1^n 2^{-\sum_{i=2}^j k_i - \sum_{i=1}^j \binom {n_i} 2}}{\ggf H (\rho_1, 1, 1) \prod_{i=1}^j n_i!}
  \end{equation*}
  where~$n_i$ and~$k_i$ denote the number of edges and sources of the~$i$-th term of the sequence.
  Of course, the \texttt{\textbf{while}} loop can fail to produce a sequence of total size~$n$, so the probability $\PP_{\text{repeat}}[S]$ that the \texttt{\textbf{repeat}} block yields a specific sequence~$S$ by rejection is given by
  \begin{equation*}
    \PP_{\text{repeat}}[S]
    = \frac{\PP_{\text{while}}[S]}{\sum_{\text{$S'$ of total size~$n$}} \PP_{\text{while}}[S']}
    = \frac{\ggf H(\rho_1, 1, 1)\PP_{\text{while}}[S]}{\rho_1^n [z^n] \frac{\ggf H(z, 1, 1)}{1 - \ggf H(z, 1, \tfrac 1 2)}}
    = \frac{\ggf H(\rho_1, 1, 1) \PP_{\text{while}}[S]}{\rho_1^n [z^n] \DAG(z, 1, 1)}\cdot
  \end{equation*}
  Finally, in order for the leapfrogging algorithm to produce a specific DAG~$G$, it first has to produce the exact sequence~$S$ arising from the peeling process in the rejection phase, and then shuffle the labels and generate the edges in the unique way that corresponds to~$G$.
  This happens with probability
  \begin{align*}
    &
    \PP_{\text{repeat}}[S] \cdot \binom{n}{n_1, n_2, \ldots, n_j}^{-1}
    \cdot 2^{-\sum_{i + 2 \leq i'} n_i n_{i'}}
    \cdot 2^{-\sum_{i=1}^{j-1} n_i n_{i+1} - k_{i+1}} \\
    &= \frac{1}{n! 2^{\binom n 2}} \frac{1}{[z^n] \DAG(z, 1, 1)}\cdot\qedhere
  \end{align*}
\end{proof}

As discussed above, the leapfrogging approach enables the quick generation of a structure of exact size~$n$. In our case, because the majority of the computational cost of generation lies in the edges \emph{between} the layers, this implies that we obtain an optimal sampler.

\begin{lemma}[Complexity of Algorithm~\ref{algo:leapfrog}]
  In order to generate a uniform directed acyclic graph of size~$n$, Algorithm~\ref{algo:leapfrog} performs a~$O(n)$ number of system inversion (in the~$\GH$ function) and uses~$\frac{n^2}{2} + o(n^2)$ random bits in average.
\end{lemma}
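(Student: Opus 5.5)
We split the cost of Algorithm~\ref{algo:leapfrog} into two parts and bound each one separately. The first part is the rejection phase, i.e.\ the \textbf{\texttt{repeat}} loop producing a sequence of $\mathcal H$-structures. The second part is the final phase (lines~\ref{line:frog:shuffle} and~\ref{line:frog:edges}), in which the leaps are relabelled and connected.

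\emph{Rejection phase.} We first show that the number of repeat iterations is $O(1)$ in expectation. By the computation in the proof of Lemma~\ref{lem:leapfrog:ok}, a single iteration succeeds with probability
\[
p_n = \frac{\rho_1^n\,[z^n]\DAG(z,1,1)}{\ggf H(\rho_1,1,1)}.
\]
By~\eqref{eq:dag_asymptotic}, $\rho_1^n[z^n]\DAG(z,1) \to 1/(\rho_1\Set(-\rho_1/2,1))$, so $p_n$ tends to a positive constant. This is the standard supercritical leapfrogging argument: $\ggf H(z,1,\tfrac12)$ is analytic beyond $\rho_1$ and takes the value $1$ there, so the walk of partial sums is a renewal process whose probability of hitting exactly $n$ converges to $1/\mathbb{E}[\text{leap size}]$.

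Next we count the leaps. Each leap is drawn with $z=\rho_1$ from a function analytic in $|z|<2\rho_1$, so its size has exponentially decaying tails and a finite mean. Hence one iteration produces $O(n)$ leaps in expectation, with total overshoot $O(1)$. Each leap of size $\ell$ costs exactly $\ell$ solves of line~\ref{line:boltz:H:solve} if generated through Algorithm~\ref{algo:boltz:peel}, by Lemma~\ref{lem:peel:complexity}; this costs one solve per vertex, including those inside the recursive DAG calls. The expected number of inversions is therefore $O(1)\cdot O(n)=O(n)$.

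The random bits spent inside the leaps also need a bound, since internal edges of the $\mathcal H$-structures are drawn there. A leap of size $\ell$ uses $O(\ell^2)$ bits, and $\mathbb{E}[\ell^2]=O(1)$ by the exponential tails. So each iteration uses $O(n)$ bits, and the whole rejection phase uses $O(n)=o(n^2)$ bits. Wald's identity, applied with the stopping rule at total size $\ge n$, makes this precise.

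\emph{Edge phase.} Shuffling the labels costs $O(n\log n)$ bits. The edges between leaps are fair Bernoulli variables ($w=1$), one bit per pair, except for forced edges, which cost nothing. Pairs lying in different leaps number at most $\binom n2$, which gives at most $n^2/2$ bits. Pairs inside a leap were already paid for in the rejection phase, where they contributed $O(n)$. For the matching lower bound we can invoke entropy: the uniform distribution on DAGs of size $n$ has entropy $\log_2(n!\,2^{\binom n2}[z^n]\DAG(z,1)) = \frac{n^2}{2}+O(n\log n)$ by~\eqref{eq:dag_asymptotic}. Either way, the total is $\frac{n^2}{2}+o(n^2)$.

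\emph{Main obstacle.} The delicate step is the rejection phase: rigorously showing $p_n\to c>0$ and bounding the expected number and squared sizes of leaps. The leaps are not independent and identically distributed, because the first is drawn with $u=1$ and the rest with $u=\tfrac12$. To handle this, I would first condition on the first leap, which has finite moments. The remaining leaps are i.i.d., so I would apply the renewal theorem, or equivalently a singularity analysis of $1/(1-\ggf H(z,1,\tfrac12))$ at its simple pole $\rho_1$. The needed moment bounds would come from the analyticity of $\ggf H$ for $|z|<2\rho_1$.
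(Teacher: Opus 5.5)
Your proposal is correct and follows essentially the same route as the paper. The success probability $\rho_1^n[z^n]\DAG(z,1,1)/\ggf H(\rho_1,1,1)$ tends to a positive constant; the paper gets $\rho_1^{-1}$, which matches your renewal limit $1/\mathbb{E}[\text{leap size}]$ because that mean equals $\rho_1$. Constant-size leaps then make the rejection phase cost $O(n)$, and the final edge phase supplies the dominant $\binom n2 - O(n)$ bits.

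You are more careful than the paper in three places. You use Wald's identity and second moments of the leap sizes. You count the solves of rejected iterations, which the paper glosses as ``exactly $n$''. Your entropy lower bound is a valid substitute for the paper's direct count of the $\binom n2 - O(n)$ unforced pairs between leaps.
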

\begin{proof}
  As in the proof of Lemma~\ref{lem:peel:complexity}, the number of system inversions is clearly equal to~$n$.
  From the proof of Lemma~\ref{lem:leapfrog:ok}, one can compute that the probability that the \textbf{\texttt{while}} loop successfully produces a sequence of total size~$n$ is
  \begin{equation*}
    \frac{\rho_1^n [z^n] \DAG(z, 1, 1)}{\ggf H(\rho_1, 1, 1)}
    = \rho_1^n [z^n] \DAG(z, 1, 1) \Set(-\tfrac{\rho_1}2, 1)
    \underset{n\to\infty}\to \rho_1^{-1} \approx 0.672.
  \end{equation*}
  Furthermore, because the sequence is super-critical, every leap in the sequence is, on average, of constant size.
  It follows that the total cost of generating the edges inside the leaps during the rejection phase is linear in~$n$.

  Since the final stage of the algorithm consists of generating~$\binom n 2 - O(n)$ Bernoulli variables with parameter~$\tfrac 1 2$, it costs exactly~$\binom n 2 - O(n)$ random bits, which dominates the cost of the rejection phase.
\end{proof}

\section{Implementation considerations}

Evaluating the~$\Set(z, 1)$ function with high precision is efficient in practice due to its fast convergence.
Our implementation~\cite{WGimpl} of the algorithms relies on floating-point arithmetic, which is generally assumed to be adequate for the practical implementation of Boltzmann samplers.
Of course, a more precise implementation with arbitrary precision that keeps track of numerical errors is also possible and will not impact the complexity of our samplers, as most of their computational cost lies in the generation of $\texttt{Bernoulli}(\tfrac 1 2)$ variables for connecting edges.

Furthermore, for both the rejection sampler and the leapfrogging algorithm, a fast implementation should allocate the DAG only once and perform the rejection phase in place to further minimise the cost of rejection.

As a final remark, it must be noted that Algorithm~\ref{algo:boltz:peel} is not tail-recursive. It is thus suitable for small size sampling, such as for the generation of the leaps in Algorithm~\ref{algo:leapfrog}, but it might not be appropriate for the rejection approach presented in Section~\ref{sec:rejection} in its current form.

\section{Conclusion and perspectives}

The present paper extends the framework of Boltzmann sampling to digraph families and showcases its effectiveness by providing Boltzmann samplers for DAGs. Building on top of these samplers, we also provide an optimal exact-size sampler for DAGs, thus closing the gap between available samplers and the theoretical complexity lower bound given by the entropy.
At a more fundamental level, we also present a new decomposition scheme that is amenable not only to random generation but also to analytic combinatorics techniques.

The reader may have noticed that another DAG decomposition might be obtained by differentiating the $\DAG(z,w,u)$ function with respect to $z$. Additionally, other decompositions that are not a direct consequence of the graphical generating function of DAGs (as in the case of root-layering decomposition) can be found. With the approach presented here, they can be used to build different DAGs samplers. However, it is worth noting that not all of them satisfy the sequential decomposition schema (see Lemma~\ref{lem:spec:sequence}), which is key to applying the leapfrogging idea to create an asymptotically optimal DAGs sampler.
Nevertheless, these alternative specifications are still worthy of interest. In particular, Boltzmann samplers have been used to study the probabilistic aspects of generated graph parameters (see~\cite{dmtcs:3552, 10.5555/2133036.2133127, 10.1112}).
Therefore, the next natural step would be to use the samplers presented in this paper, along with other samplers employing the aforementioned specifications, to obtain novel information about the probability distribution of various parameters of DAGs.

Finally, an analysis of the graphic generating functions for various classes of digraphs was presented in the work of Dovgal, de Panafieu, Ralaivaosaona, Rasendrahasina, Wagner~\cite{DdRRW2024}. 
Therefore, our graphic Boltzmann model and the approach presented for creating efficient random DAGs generators can be applied to build samplers for these other classes of digraphs.

\bibliographystyle{plain}
\bibliography{bibliography}

\end{document}